\documentclass[journal]{IEEEtran}
\usepackage{blindtext}
\usepackage{algorithm,algorithmic}
\usepackage{graphicx}
\usepackage{subfig}
\usepackage{amssymb,amsmath,mathtools,amsthm }
\usepackage[colorlinks]{hyperref}
\usepackage{url}
\usepackage{comment}
\usepackage{bm}
\usepackage{orcidlink}
\newcommand{\mat}[1]{{\boldsymbol{#1}}}

\newcommand{\ten}[1]{{\mathcal #1}}
\newcommand{\tY}{\ten{Y}}
\newcommand{\tX}{\ten{X}}
\newcommand{\tB}{\ten{B}}
\newcommand{\tT}{\ten{T}}
\newcommand{\tM}{\ten{M}}

\newcommand{\ba}{\boldsymbol a}

\newcommand{\bw}{\boldsymbol w} 
\newcommand{\bx}{\boldsymbol x} 
\newcommand{\bv}{\boldsymbol v} 
\newcommand{\bu}{\boldsymbol u} 
\newcommand{\by}{\boldsymbol y} 
\newcommand{\bb}{\boldsymbol b} 
 
\newcommand{\bc}{\boldsymbol c} 

\newcommand{\bone}{\boldsymbol 1} 

\newcommand{\blambda}{\boldsymbol \lambda} 
 
\newcommand{\bgamma}{\boldsymbol \gamma}

\newtheorem{theorem}{Theorem}
\newtheorem{lemma}{Lemma}

\newcommand{\vecc}{\text{vec} }
\newcommand{\diag}{\text{diag} }
\newcommand{\mbbE}{\mathbb E} 
\newcommand{\mbbR}{\mathbb R}
\newcommand{\mbbRp}{\mathbb R_{>0}}
\newcommand{\mbbRnn}{\mathbb R_{\geq 0}}
\newcommand{\mbbN}{\mathbb N_0} 

\newcommand{\Poi}{\text{ Poisson} }

\newcommand\sdots{\hbox to 1em{.\hss.\hss.}}

\DeclareMathOperator*{\argmax}{arg\,max}

\newcommand{\transpose}{^{\top}}

\newcommand{\mi}[1]{{\boldsymbol #1}}

\newcommand{\mVq}[1]{\mat{V}^{(#1)}}
\newcommand{\mtVq}[1]{\widetilde{\mat{V}}^{(#1)}}
\newcommand{\mhVq}[1]{\widehat{\mat{V}}^{(#1)}}
\newcommand{\mUp}[1]{\mat{U}^{(#1)}}
\newcommand{\mtUp}[1]{\widetilde{\mat{U}}^{(#1)}}
\newcommand{\mhUp}[1]{\widehat{\mat{U}}^{(#1)}}
\newcommand{\Omq}[1]{\mat{\Omega}^{(#1)}}
\newcommand{\Psip}[1]{\mat{\Psi}^{(#1)}}

\newcommand{\syi}{y_{(i)}}
\newcommand{\byi}{{\boldsymbol y}_{(i)}}

\newcommand{\tYi}{\tY_{(i)}}

\newcommand{\tYim}{\tY_{(i)\mi{m}}}

\newcommand{\byt}{{\boldsymbol y}_{(t)}}

\newcommand{\tYt}{\tY_{(t)}}

\newcommand{\tYtm}{\tY_{(t)\mi{m}}}

\newcommand{\tYtmo}{\tY_{(t-1)}}

\newcommand{\tYtmon}{\tY_{(t-1)\mi{n}}}

\newcommand{\bxi}{{\boldsymbol x}_{(i)}}

\newcommand{\tXi}{\tX_{(i)}}
\newcommand{\tXin}{\tX_{(i)\mi{n}}}

\newcommand{\sxt}{x_{(t)}}
\newcommand{\bxt}{{\boldsymbol x}_{(t)}}

\newcommand{\tXt}{\tX_{(t)}}
\newcommand{\tXtn}{\tX_{(t)\mi{n}}}

\newcommand{\thB}{\widehat{\tB}}
\newcommand{\tBnm}{\tB_{\mi{n},{\mi{m}}}}

\newcommand{\tYm}{\tY_\mi{m}}
\newcommand{\tBm}{\tB_\mi{m}}

\newcommand{\XiB}{\langle \tXi | \tB \rangle}

\newcommand{\XiBm}{\langle \tXi | \tB \rangle_\mi{m}}

\let\citep\cite
\let\citet\cite

\usepackage{xcolor}
\definecolor{blueee}{HTML}{000000}
\definecolor{greeen}{HTML}{000000}

\definecolor{orcidlogocol}{HTML}{A6CE39} 
\newcommand{\orcid}[1]{{\href{orcid.org}{\includegraphics[width=1.2em]{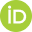}}}}

\title{Poisson-response Tensor-on-Tensor Regression and Applications}

\author{
Carlos~Llosa-Vite\orcidlink{0000-0001-7964-4201}
and 
Daniel~M.~Dunlavy\orcidlink{0009-0003-2299-4798}
\thanks{C. Llosa-Vite and D. Dunlavy are with Sandia National Laboratories, Albuquerque, NM, USA.}
\thanks{
Sandia National Laboratories is a multimission laboratory managed and operated by National Technology \& Engineering Solutions of Sandia, LLC, a wholly owned subsidiary of Honeywell International Inc., for the U.S. Department of Energy’s National Nuclear Security Administration under contract DE-NA0003525. SAND2026-19572R}
}

\begin{document}

\IEEEcompsoctitleabstractindextext{
\begin{abstract}
We introduce Poisson-response tensor-on-tensor regression (PToTR), a novel regression framework designed to handle tensor responses composed element-wise of random Poisson-distributed counts. Tensors, or multi-dimensional arrays, composed of counts are common data in fields such as international relations, social networks, epidemiology, and medical imaging, where events occur across multiple dimensions like time, location, and dyads. PToTR accommodates such tensor responses alongside tensor covariates, providing a versatile tool for multi-dimensional data analysis. We propose algorithms for maximum likelihood estimation under a canonical polyadic (CP) structure on the regression coefficient tensor that satisfy the positivity of Poisson parameters and then provide an initial theoretical error analysis for PToTR estimators. We also demonstrate the utility of PToTR through three concrete applications: longitudinal data analysis of the Integrated Crisis Early Warning System database, positron emission tomography (PET) image reconstruction, and change-point detection of communication patterns in longitudinal dyadic data. These applications highlight the versatility of PToTR in addressing complex, structured count data across various domains.
\end{abstract}
\begin{IEEEkeywords}
Poisson tensor decomposition,
Poisson regression,
tensor regression,
dyadic data,
sub-exponential random variables,
tensor decompositions,
positron emission tomography.
\end{IEEEkeywords}

}

\maketitle
\IEEEdisplaynontitleabstractindextext

\begin{figure}[b!]
    \centering       
    \vspace{-20pt}
    \includegraphics[width=\columnwidth]{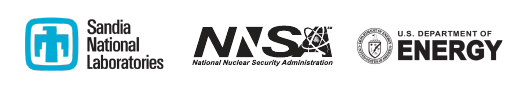}
    \vspace{-20pt}
\end{figure}

\section{Introduction}\label{sec:intro}
The classic identity-link Poisson regression model   \citep{mccullagh89,marschner10}
\begin{equation}\label{eq:poisreg}
   \syi \overset{\text{indep.}}{\sim}\Poi(\bb\transpose\bxi)
\end{equation}
relates the expected count response $\mbbE(\syi)$ as a linear combination of the multivariate covariate $\bxi$,  allowing for direct interpretation of the regression coefficient $\bb$. Throughout this paper, observations are enumerated using $i=1,2,\dots,I$ (or using $t=1,2,\dots,T$ when denoting time), scalars are lowercase (e.g., $y$), vectors are bold lowercase (e.g., $\by$), matrices are bold uppercase (e.g., $\mat{Y}$), and tensors, or multi-dimensional arrays, are script uppercase (e.g., $\tY$). Furthermore, $\mbbN$ denotes the set of natural numbers including $0$, $\mbbR$ the set of real values, $\mbbRp$ the set of positive real values, and $\mbbRnn$ the set of non-negative real values.  

When the response of an identity-link Poisson regression model $\byi$ is multivariate, the model can be formulated as
\begin{equation}\label{eq:mpoisreg}
\byi\overset{\text{indep.}}{\sim}\Poi(\mat{B}\bxi),
\end{equation}
which is short-hand for having the entries of $\byi$ independently follow a Poisson distribution with corresponding rates in $\mat{B}\bxi$. The number of entries in $\mat{B}$ grows considerably faster than the dimensions of $\byi$ or $\bxi$, making model fitting and inference intractable for most real-world applications. One could alleviate the issue of large numbers of parameters through $\ell_1$ lasso regularization \citep{tibshirani96}, or low-rank regularization on $\mat{B}$ \citep{mukherjeeandzhu11,fitzgeraldetal22}. However, such methodologies are limited to scalar- or vector-valued observations. In this work, we focus on tensor-on-tensor regression (ToTR)---where responses and covariates can be tensors, or multi-dimensional arrays (including scalars, vectors, matrices, and higher-order arrays)---for applications involving Poisson-distributed count data. In the next section, we introduce several motivating applications where such regression problems arise.

\subsection{Motivating application problems}\label{sec:motivation}

\subsubsection{Longitudinal data prediction}\label{sec:motiv_longitudinal}
Predicting relationships and actions using historical data is crucial in many longitudinal data analysis applications. For example, political analysts and strategists use the Integrated Crisis Early Warning System (ICEWS) database \citep{brien12} to inform policy and strategy associated with future international relations, which can have significant impact in global stability, security, and diplomacy \citep{zhao21}. By accurately forecasting interactions between countries using ICEWS data, governments, international organizations, and policymakers can take proactive measures to mitigate risks, allocate resources effectively, and make informed decisions. 

In previous work, Hoff modeled a subset of the ICEWS database consisting of weekly counts of directed relationships or actions taken between 25 countries and four classes of actions as a ToTR problem~\citet{hoff15}. The number of actions taken during the $t$-th week was encoded in a tensor $\tYt$ of size $25\times 25\times 4$ (country $\times$ country $\times$ action). Then, a simple order-1 autoregressive model for forecasting future actions can be formulated from Equation \eqref{eq:mpoisreg} with $\byt=\vecc(\tYt)$ as a vector response (where $\vecc(\cdot)$ denotes the tensor vectorization operation), and $\bxt=[1 \; \vecc(\tYtmo)\transpose]\transpose$ as a vector covariate. Due to the identity link, fixing the first entry of $\bxt$ to one ensures an additive intercept term. Furthermore, the entries of $\mat{B}$ encode rich information regarding the effect that previous actions have on current ones. However, without additional considerations on the inherent structure of $\mat{B}$, fitting this model would require prohibitively many temporal observations. In Section~\ref{sec:autoregressive}, we demonstrate improvements over Hoff's approach by imposing low-rank tensor structure on the regression parameters.

\subsubsection{Image reconstruction}
\label{sec:motiv_pet}
Positron Emission Tomography (PET) is a major image modality widely used in hospitals as a tool for diagnosis and intervention \citep{ollingeretal97}. PET imaging provides detailed insights into metabolic processes within the body, making it invaluable for detecting and evaluating conditions such as cancer, neurological disorders, and cardiovascular diseases. Accurate reconstruction of these images is essential for making informed decisions regarding patient care, optimize treatment strategies, and evaluate interventions.

A PET scanner maps radioactive tracer concentrations from a subject into measurement data known as a sinogram. PET imaging attempts to reconstruct (i.e., estimate) an image matrix $\mat{B}$ corresponding to the subject that led to the observed sinogram $\mat{Y}$. More details on this process are provided in Section \ref{sec:pet}. A popular algorithm for PET reconstruction is the maximum likelihood-expectation maximization (ML-EM) algorithm of~\citet{sheppandvardi82} and its accelerated version~\citep{hudsonandlarkin94}. ML-EM is popular because it leverages the discrete nature of the data to enhance the image reconstruction's quality and reliability. 
ML-EM estimates a model that assumes that each entry of $\mat{Y}$ independently follows a Poisson distribution with rates corresponding to 
$\mathcal{R}(\mat{B})$--i.e., the discrete Radon transform of $\mat{B}$, which is of the same dimensions as $\mat{Y}$. Since the discrete Radon transform is a linear operator \citep{beylkin87}, the statistical model can be written as an instance of Equation \eqref{eq:poisreg}, with responses $\syi$ corresponding to entries in $\mat{Y}$ and covariates $\bxi$ corresponding to the Radon transform's basis.

Note that ML-EM progressively amplifies noise present across iterations if not properly regularized~\citep{snyderetal87,barrettetal94}. Several approaches have been proposed to remedy this problem, such as estimating the number of iterations \citep{veklerovandllacer87,selivanovetal01}, regularization \citep{depierroandyamagishi01,lange90}, and smoothing \citep{tomasiandmanduchi98,dabovetal07}.
In Section~\ref{sec:pet}, we show 
 that the inherent tensor structure in the PET reconstruction problem can be leveraged to significantly reduce the number of regression parameters and model reconstruction error.

\subsubsection{Change-point detection} 
\label{sec:motiv_changepoint}

Detecting significant change points in dyadic data allows us to identify substantial shifts in communication patterns between pairs of communicants over time. These change points can indicate critical moments when the nature of communication undergoes significant changes due to shifts in behavior or external influences. For example, the well-studied Enron Corpus~\citet{enron}---a collection of email messages between employees of the Enron Corporation leading up to the company's collapse in 2001---has been studied to identify if and when changes in communication patterns occurred between employees leading up to events that were later investigated for potential regulatory improprieties. In previous work, Bader et al. modeled the email communications by topic over time using low-rank tensor decompositions to characterize the temporal evolution of the data and illustrated several key changes over time~\citet{bader_enron}.

For the general change-point detection problem associated with communications data, consider a count tensor $\tYt$ of size $M_1\times M_2\times M_3$ encoding the number of times communications occur across $M_2$ senders, $M_3$ receivers, and $M_3$ topics at time $t$. We can model a change-point in communications at a known time $\tau$ as $\mbbE(\tYt) = \sxt\tB_1 + (1-\sxt) \tB_2$. Here  $t<\tau$ implies that $\tYt$ occurred before the change-point, and hence $\sxt=1$ and $\mbbE(\tYt) = \tB_1$. Similarly, $t>\tau$ implies that $\tYt$ occurred after the change-point, and hence $\sxt=0$ and $\mbbE(\tYt) = \tB_2$. These conditions are satisfied in Equation \eqref{eq:mpoisreg} if we use $\vecc(\tYt)$ as the response vector, and $[\sxt \ \ 1-\sxt]\transpose$ as the covariate vector. Although this framework assumes a known value of $\tau$, the model can be fit once for multiple values of $\tau$, and $\tau$ can be chosen as the one with the largest resulting loglikelihood. However, doing so would require estimating $\mat{B}$ in Equation \eqref{eq:mpoisreg} of size $2M_1M_2M_3$ multiple times, which can be challenging unless a very long temporal sequence of data is observed or we leverage the inherent tensor structure of $\tYt$. In Section \ref{sec:ptanova}, we develop a methodology that can leverage such tensor structure and demonstrate its use on synthetically generated communication data.

\subsection{Background and related work}
The previous examples highlight the limitations of training the model in Equation \eqref{eq:mpoisreg} on tensor-valued data without accounting for its inherent tensor structure. Indeed, in unconstrained vector-variate regression, the number of regression parameters in $\mat{B}$ increases with the dimensions of the vectorized tensor responses and covariates, leading to computational inefficiencies and potential overfitting.

Tensor decompositions enable the representation of multidimensional data in a more compact form by considering its inherent structure. The most common tensor decomposition methods include canonical polyadic (CP) decomposition \citep{carrollandchang70,harshman70}, Tucker decomposition \citep{tucker66}, and tensor train decomposition \citep{oseledets11}.  A tensor $\tT \in\mbbR^{M_1\times\dots\times M_P}$ has a CP decomposition of rank $R$ if 
\begin{equation}\label{eq:CPform1}
\tT = \sum_{r=1}^R \lambda_r \ba^{(1)}_r\circ\dots\circ \ba^{(P)}_r = [\![\blambda; \mat{A}^{(1)},\dots,\mat{A}^{(P)}]\!]\\
\end{equation}
where each $\ba^{(p)}_r$ denotes the $r$-th column of the  $M_p\times R$ matrix $\mat{A}^{(p)}$, $\circ$ denotes the vector outer product \citep{koldaandbader09}, and $[\![\blambda; \mat{A}^{(1)},\dots,\mat{A}^{(P)}]\!]$ with $\blambda=[\lambda_1,\dots,\lambda_R]$ is the compact notation of the decomposition. When $R$ is the smallest value such that Equation~\eqref{eq:CPform1} holds, the decomposition is \emph{exact}  (see~\citep{BaKo2025} for complete details). 
As the value of $R$ for exact decomposition of a tensor is in general unknown and computing it is NP-hard~\citep{HiLi13}, we can approximate a tensor with a \emph{low-rank CP decomposition} by minimizing $\|\tT-\tM\|_F$, subject to $\tM= [\![\blambda; \mat{A}^{(1)},\dots,\mat{A}^{(P)}]\!]$ for a value of $R$ that is much less than the sizes of the tensor dimensions $M_1,\dots, M_P$. Here $\|\cdot\|_F$ denotes the tensor Frobenius norm and is equivalent to the least squares loss function for fitting $\tM$ to $\tT$.

\subsubsection{The Poisson canonical polyadic (PCP) tensor model}\label{sec:pcp}
The Poisson canonical polyadic (PCP) tensor model \citep{chiandkolda12, lvdllp-arxiv:25} is a powerful statistical technique designed to estimate relationships in tensor count data via low-rank approximation of a tensor of Poisson parameters. Unlike traditional tensor decomposition methods that optimize the least squares loss between a tensor and a low-rank tensor decomposition, PCP assumes a Poisson noise model of the data and computes maximum likelihood estimates of the underlying Poisson parameters element-wise via approximate low-rank CP decomposition of the parameter tensor. 
The random tensor $\tY\in\mbbN^{M_1\times\dots\times M_P}$ follows a PCP distribution, written as
\begin{equation}\label{eq:PCP}
    \tY\sim\Poi(\tB),
\end{equation}
if each entry $\tYm$ independently follows a Poisson distribution $\Poi(\tBm)$, and $\tB \in \mbbRp^{M_1\times\dots\times M_P}$ has the form of a CP decomposition in Equation \eqref{eq:CPform1}. Here ${\boldsymbol m}$ denotes the element-wise tensor multi-index $(m_1,m_2,\dots,m_P)$ with $m_p \in \lbrace1,2,\dots,M_p\rbrace$ for $p=1,\dots,P$.

\subsubsection{Tensor-on-tensor regression (ToTR)}\label{sec:totr}
Unlike tensor decompositions, tensor regression models aim to understand and quantify relationships between tensor responses and/or covariates. Tensor regression reduces the parameter space and regularizes the regression coefficients by leveraging low-rank tensor decomposition techniques, ensuring stable parameter estimates. Several regression frameworks that efficiently allow for tensor responses or covariates (but not both) have recently been considered in \citep{zhouetal13,rabusseauandkadri16,sunandli17,liandzhang17,guhaniyogietal17,lietal18,zhouetal21}.

Tensor-on-tensor regression (ToTR) refers to the case where both the responses and covariates are tensors. Consider a tensor response  $\tYi \in \mbbR^{M_1\times \dots \times M_P}$ and a tensor covariate $\tXi \in \mbbR^{N_1\times \dots \times N_Q}$  (where $i=1,2,\dots,I$). An identity-link tensor-on-tensor regression model satisfies
\begin{equation}\label{eq:totr1}
\mbbE(\tYi) =  \XiB,
\end{equation}
where $\tB \in\mbbR^{N_1\times \dots \times N_Q\times M_1\times \dots \times M_P}$ is the regression coefficient tensor with the combined dimensions of the covariate $\tXi$ and response $\tYi$, and $\XiB \in \mbbR^{M_1\times \dots \times M_P}$ 
denotes the \emph{partial tensor contraction} of $\tXi$ onto $\tB$, defined element-wise as
\begin{equation}\label{eq:partialcontraction}
\XiBm = \sum_{n_1=1}^{N_1} \sum_{n_2=1}^{N_2} \cdots \sum_{n_Q=1}^{N_Q} \tXin \tBnm.
\end{equation}
where ${\boldsymbol m}$ and ${\boldsymbol n}$ are multi-indices and ${\boldsymbol n}, {\boldsymbol m}$ is a double multi-index over the combined response and covariate dimensions associated with the regression coefficients in $\tB$.

ToTR has been considered in \citep{hoff15,lock17,llosa18,raskuttietal19,liuetal20,gahrooeietal21,leeetal24,luoandzhang24}. These methodologies attempt to estimate the regression coefficients in Equation~\eqref{eq:totr1} by minimizing the least squares loss, making them appropriate for cases where each entry in $\tYi$ is independent, homoscedastic (constant variance), and Gaussian. \citet{llosaandmaitra22} considered the case where $\tYi$ follows a heteroscedastic tensor-variate normal distribution; \citep{akdemirandgupta11,hoff11}, and \citet{llosaandmaitra24} further considered the case where $\tYi$ has heavier or lighter tails than Gaussian. However, to our knowledge, existing ToTR models have primarily focused on continuous data and have not adequately accommodated discrete tensor responses, such as those of Section \ref{sec:motivation}.

\subsection{Overview and our contributions}
PCP decompositions (Section \ref{sec:pcp}) provide accurate and interpretable low-rank models of tensor count data by leveraging the statistical properties of the Poisson distribution. Meanwhile, ToTR (Section \ref{sec:totr}) models the complex relationships between tensor responses and covariates. In this article, we introduce \emph{Poisson-response tensor-on-tensor regression (PToTR)}, a novel method that combines the statistical properties of PCP decompositions with the supervised modeling capabilities of ToTR, extending the latter to handle discrete response data. To our knowledge, this is the first instance of ToTR being adapted for discrete data. PToTR emerges as a versatile tool applicable in various contexts, as demonstrated by the challenges addressed in Section \ref{sec:motivation}.

In section \ref{sec:ptotr} we formulate PToTR in detail, derive a maximum likelihood estimation algorithm for estimating the regression coefficients, and present a minimax lower bound on estimation error. These derivations will be used in Sections \ref{sec:autoregressive}, \ref{sec:pet}, and \ref{sec:ptanova}, where we will apply PToTR to the motivating problems of Section \ref{sec:motivation}. Finally, in Section \ref{sec:conclusion} we provide concluding remarks and ideas for future work.

\section{Poisson response tensor-on-tensor regression (PToTR)}\label{sec:ptotr}
Considering the ToTR framework of Equation \eqref{eq:totr1},  and following the PCP notation of \eqref{eq:PCP}, we formulate PToTR as
\begin{equation}\label{eq:ptotr}
    \tYi \overset{\text{indep.}}{\sim} \Poi(\XiB),
\end{equation}
where $\tB \in\mbbRp^{N_1\times \dots \times N_Q\times M_1\times \dots \times M_P}$ is the regression coefficient with the combined dimensions of the $i$-th covariate $\tXi \in \mbbRnn^{N_1\times \dots \times N_Q}$ and the $i$-th response $\tYi \in \mbbN^{M_1\times \dots \times M_P}$  ($i=1,2,\dots,I$). 
Upon vectorizing both sides of Equation \eqref{eq:ptotr}, we can write PToTR in the form of Equation \eqref{eq:mpoisreg}, where $\byi = \vecc(\tYi)$ is a vector response, $\bxi=\vecc(\tXi)$ is a vector covariate, and $\mat{B}$ is a large coefficient matrix  containing $\prod_pM_p\prod_qN_q$ entries of $\tB$ ($\mat{B}$ is a canonical matricization of $\tB$, see \citep[Table 1]{llosaandmaitra22}). 
While $\mat{B}$ has a large number of entries, it has inherent tensor structure that can be considered to restrict the parameter space. 
We will assume that $\tB$ has the form of a rank $R$ CP decomposition
\begin{equation}\label{eq:cpform2}
\tB = [\![\blambda; \mVq{1},\dots,\mVq{Q},\mUp{1},\dots,\mUp{P}]\!]
\end{equation}
with factor matrices $\mVq{q}\in\mbbRp^{N_q\times R}, q =1, \dots,Q$, and $\mUp{p}\in\mbbRp^{M_p\times R}, p =1,\dots,P$. 
This way, we reduce the number of parameters in $\tB$ from $\prod_pM_p\prod_qN_q$ to $R(\sum_pM_p+\sum_qN_q)$, making parameter inference possible in cases where a prohibitively large sample size $I$ would be required. 

\subsection{Identifiability and non-degeneracy}\label{sec:identif1}

Lack of identifiability in PToTR can occur  because any two factor matrices in $\{\mVq{1},\dots,\mVq{Q},\mUp{1},\dots,\mUp{P}\}$ can be scaled $\diag(\bgamma)$ and $\diag(\bgamma)^{-1}$ (where $\bgamma\in\mbbR^{R}$ is non-zero) and the PToTR model will remain unchanged. To ensure identifiability, we follow \cite{lvdllp-arxiv:25}  and scale the factor matrix columns to sum to one, so that for $\bone_R$ denoting a $R$-variate vector of ones
$$
\bone_R = \mVq{q}{\transpose}\bone_{N_q}=\mUp{p}{\transpose}\bone_{M_p}
$$
for all $q =1, \dots,Q$ and $p =1,\dots,P$.
Hence, the vector $\blambda$ absorbs all of the weights, and we order the entries such that
$\lambda_1 \geq \dots \geq \lambda_R$. 

A degenerate Poisson random variable in PToTR occurs if for any $\mi{m}$ and $i$, it holds that $\XiBm=0$. To avoid degeneracy, we will first assume that $\xi \coloneqq \min_i\|\tXi \|>0$, meaning that while $\tXi$ is strictly non-negative element-wise, it must contain at least one non-zero entry (such as it is the case for our illustrative examples of Section \ref{sec:motivation}). 
Second, we will assume that $\tB$ is strictly positive element-wise, which will be ensured by having the factors $\blambda,\mVq{1},\dots,\mVq{Q},\mUp{1},\dots,\mUp{P}$ be strictly positive element-wise. Next, we perform maximum likelihood over these factors and constraints.

\subsection{Maximum likelihood estimation}\label{sec:pToTRmle}
In this section, we present a maximum likelihood estimation algorithm designed to fit the PToTR model, where the loglikelihood function can be written as 
\begin{equation}\label{eq:llikpois}
 \ell(\tB) =  \sum_{i,\mi{m}}\!
 \Big[ \tYim  \log\left( \XiBm \right) - \XiBm \Big] - C \; .
\end{equation}
The constant $C=\sum_{i,\mi{m}}\log(\tYim !)$ does not depend on $\tB$ and hence it will be ignored in the remainder of this article.

\subsubsection{The optimization problem}\label{sec:MLEproblem}
To numerically approximate the maximum likelihood estimator (MLE) of $\tB$, we solve the optimization problem
\begin{equation}
\begin{split}
    &\max\  \ell(\tB) \quad \\
    &\text{ s.t.}\quad 
    \tB = [\![\blambda; \mVq{1},\dots,\mVq{Q},\mUp{1},\dots,\mUp{P}]\!] \in\Theta,
\end{split}
\end{equation}
where
$
\Theta = \Theta_\lambda \times \Omq{1} \times \dots \times \Omq{Q} \times \Psip{1} \times \dots \times \Psip{P}
$
with $\Theta_\lambda=(0,\infty)^R$, $\Omq{q}=\{\mVq{q}\in(0,1)^{N_q\times R}:\mVq{q}{\transpose}\bone_{N_q} = \bone_R\}$, and $\Psip{p}=\{\mUp{p}\in(0,1)^{M_p\times R}:\mUp{p}{\transpose}\bone_{M_p} = \bone_R\}$. As described in Section \ref{sec:identif1}, the constraint set  $\Theta$ is carefully chosen so that the fitted PToTR model remains identifiable and non-degenerate.

We extend the alternating optimization scheme introduced for PCP in~\cite{chiandkolda12}, which involves solving sub-problems that optimize one factor matrix while keeping all others fixed. This iterative approach can be viewed as an instance of a non-linear Gauss-Seidel algorithm \citep{bertsekasandtsitsiklis89} or a block-relaxation algorithm \citep{deleeuw94}. The sub-problems are addressed using majorization-minimization (MM) algorithms \citep{hunterandlange04} that we introduce next.

\subsubsection{The optimization sub-problems}\label{sec:MLEsubproblem}
Each sub-problem will be solved as an instance of the following theorem, which combines Theorem 2 of \citet{leeandseung00} and Theorem 4.3 of \citet{chiandkolda12} for application to PToTR. This algorithm can be used to numerically approximate the maximum likelihood estimator of $\mat{B}$ under the full-rank, vector-response, vector-covariate model of Equation \eqref{eq:mpoisreg}.
Note that $\oslash$, $*$, and $\log$ denote division, product, and logarithm applied element-wise to tensor operands.

\begin{theorem}\label{thm:nnstep}
	Let $\mat{Y}\in\mbbN^{J \times L}$, $\mat{C}\in\mbbRp^{J \times R}$, and $\mat{D}\in\mbbRp^{R \times L}$. Consider $
	f(\mat{C}) = \bone_{J}\transpose\left\{
	\mat{Y}*\log(\mat{C}\mat{D})
	-
	\mat{C}\mat{D}
	\right\}\bone_{L},
	$
	and $\mat{\Phi}(\mat{C}) = \left\{\left(
	\mat{Y}\oslash(\mat{C}\mat{D})
	\right)\mat{D}\transpose
	\right\}
	\oslash
	\left\{
	\bone_{J} \bone_{L}\transpose \mat{D}\transpose
	\right\}$.  Then the sequence $\{\mat{C}_{\{k\}}\}$ defined as 
	\begin{equation}\label{eq:MMupdate}
		\mat{C}_{\{k+1\}} \leftarrow
		\mat{C}_{\{k\}} * \mat{\Phi}(\mat{C}_{\{k\}}),
	\end{equation}
	is non-decreasing in $f$ when $\mat{C}_{\{0\}}>0$. Further, if $\mat{D}$ has linearly-independent rows, $f(\mat{C}_{\{0\}})<\infty$, and $\mat{\Phi}(\mat{C}_{\{0\}})>1$, then the sequence $\{\mat{C}_{\{k\}}\}$ will converge to a global maxima of $f(\mat{C})$.
\end{theorem}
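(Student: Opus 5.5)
The proof splits into two parts. Monotonicity follows from a majorization-minimization argument in the style of Lee and Seung. Convergence to a global maximum follows from concavity of $f$ together with the convergence analysis of Theorem 4.3 of \citet{chiandkolda12}.

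\textbf{Monotonicity.} Write $(\mat{C}\mat{D})_{jl}=\sum_r c_{jr}d_{rl}$. For the current iterate $\mat{C}_{\{k\}}>0$, define the weights
$$
\alpha_{jlr}=\frac{c^{\{k\}}_{jr}d_{rl}}{(\mat{C}_{\{k\}}\mat{D})_{jl}},
$$
which are positive and satisfy $\sum_r\alpha_{jlr}=1$. Jensen's inequality for the concave $\log$ then gives
$$
\log(\mat{C}\mat{D})_{jl}\geq\sum_r\alpha_{jlr}\log\bigl(c_{jr}d_{rl}/\alpha_{jlr}\bigr),
$$
with equality at $\mat{C}=\mat{C}_{\{k\}}$. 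Substituting into $f$ yields a minorizer
$$
g(\mat{C}\mid\mat{C}_{\{k\}})=\sum_{j,l}y_{jl}\sum_r\alpha_{jlr}\log\bigl(c_{jr}d_{rl}/\alpha_{jlr}\bigr)-\sum_{j,l,r}c_{jr}d_{rl},
$$
which satisfies $g\le f$ everywhere and touches $f$ at $\mat{C}_{\{k\}}$. The minorizer separates across the entries $c_{jr}$, and each separate piece is strictly concave. Setting $\partial g/\partial c_{jr}=0$ gives
$$
c_{jr}=\frac{\sum_l y_{jl}\alpha_{jlr}}{\sum_l d_{rl}}=c^{\{k\}}_{jr}\,\frac{\bigl[(\mat{Y}\oslash(\mat{C}_{\{k\}}\mat{D}))\mat{D}\transpose\bigr]_{jr}}{\bigl[\bone_J\bone_L\transpose\mat{D}\transpose\bigr]_{jr}},
$$
which is exactly update \eqref{eq:MMupdate}. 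The standard MM chain
$$
f(\mat{C}_{\{k+1\}})\ge g(\mat{C}_{\{k+1\}}\mid\mat{C}_{\{k\}})\ge g(\mat{C}_{\{k\}}\mid\mathbf{C}_{\{k\}})=f(\mat{C}_{\{k\}})
$$
then gives monotonicity. Positivity is preserved along the iteration as long as $\mat{Y}$ has some positive entries. Where a row of $\mat{Y}$ is zero, the corresponding entries go to zero; there I would interpret $0\log 0=0$ and note that the inequality still holds.

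\textbf{Global convergence.} First, $f$ is concave in $\mat{C}$: each term $y\log(\cdot)$ of a linear function is concave, and $-\mat{C}\mat{D}$ is linear. If $\mat{D}$ has linearly independent rows, the map $\mat{C}\mapsto\mat{C}\mat{D}$ is injective, so $f$ is strictly concave wherever $y_{jl}>0$. This gives uniqueness of the maximizer in the relevant directions.

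Next, the iterates stay in a compact set. The linear term gives $f\to-\infty$ as $\|\mat{C}\|\to\infty$. The condition $f(\mat{C}_{\{0\}})<\infty$ together with monotonicity keeps the iterates inside a superlevel set of $f$, which is bounded.

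The update map $\mat{C}\mapsto\mat{C}*\mat{\Phi}(\mat{C})$ is continuous. By Zangwill's theorem or Lange's MM convergence theory, every limit point is a fixed point. At a fixed point, each entry either satisfies $\Phi_{jr}=1$ with $c_{jr}>0$, or has $c_{jr}=0$. The gradient of $f$ equals $\bigl[(\mat{Y}\oslash(\mat{C}\mat{D}))\mat{D}\transpose\bigr]-\bone\bone\transpose\mat{D}\transpose$, so $\Phi_{jr}=1$ means the gradient vanishes in that coordinate.

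\textbf{Main obstacle.} The difficulty is ruling out spurious boundary fixed points, where $c_{jr}=0$ but $\Phi_{jr}>1$, so the KKT conditions fail. I would follow the argument of Theorem 4.3 of \citet{chiandkolda12}. Suppose a subsequence approaches a boundary fixed point with $\Phi_{jr}>1$. Then, in a neighborhood of that point, the multiplicative factor exceeds one, so $c_{jr}$ increases at each step and cannot converge to zero, a contradiction. The hypothesis $\mat{\Phi}(\mat{C}_{\{0\}})>1$ is what I expect to invoke here to initialize this argument. Once this is settled, every limit point satisfies the KKT conditions of the concave program over $\mat{C}\ge0$, and is therefore a global maximizer. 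Strict concavity then makes the limit unique, so the whole sequence converges.
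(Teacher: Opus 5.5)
Your monotonicity argument is exactly the Lee--Seung auxiliary-function proof the paper cites, and it is correct. For convergence the paper only cites Chi--Kolda, and your sketch of that part has two genuine gaps.

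\textbf{First gap: uniqueness.} Linear independence of the rows of $\mat{D}$ does not give strict concavity. The function $f$ separates over the rows of $\mat{C}$, and row $j$ is strictly concave if and only if the columns $\{\mat{D}_{:,l}: y_{jl}>0\}$ span $\mbbR^R$. For instance, take $J=1$, $R=2$, $\mat{D}=\begin{bmatrix}1&1&2\\1&2&1\end{bmatrix}$ and $\mat{Y}=[y\;0\;0]$. Then $f(\mat{C})=y\log(c_1+c_2)-4(c_1+c_2)$. All hypotheses hold, including $\mat{\Phi}(\mat{C}_{\{0\}})>1$ whenever $c_1+c_2<y/4$. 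Yet every point of the segment $c_1+c_2=y/4$ is a global maximizer, and the iteration lands in one step on one that depends on $\mat{C}_{\{0\}}$. So ``strict concavity makes the limit unique'' fails.

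\textbf{Second gap: excluding boundary fixed points.} Your exclusion of a boundary fixed point $\bar{\mat{C}}$ with $\bar c_{jr}=0$ and $\Phi_{jr}(\bar{\mat{C}})>1$ gives a contradiction only if the entire tail of the sequence stays in the neighborhood where $\Phi_{jr}>1$. Along a mere subsequence the iterates may leave, let $c_{jr}$ shrink, and come back. Upgrading to convergence of the whole sequence (e.g., $\|\mat{C}_{\{k+1\}}-\mat{C}_{\{k\}}\|\to 0$ plus Ostrowski) needs isolated fixed points, which is precisely what fails in the example above. Note also that you never actually use $\mat{\Phi}(\mat{C}_{\{0\}})>1$.

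\textbf{A repair.} A route that avoids both problems and needs only $\mat{C}_{\{0\}}>0$ is the Vardi--Shepp--Kaufman (Csisz\'ar--Tusn\'ady) Lyapunov argument for ML-EM. Let $s_r=\sum_l d_{rl}$, $\mu=\mat{C}\mat{D}$, and let $\mat{C}^\star$ be any maximizer. For $k\ge 1$, every row of $\mat{C}_{\{k\}}\mat{D}$ sums to the corresponding row sum of $\mat{Y}$. By the KKT conditions, so does every row of $\mat{C}^\star\mat{D}$. Moreover, for $c^\star_{jr}>0$ the weights $y_{jl}d_{rl}/(s_r\mu^\star_{jl})$ sum to one over $l$. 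Jensen's inequality then gives
$$
\sum_{j,r}s_r c^\star_{jr}\log\frac{c^{\{k+1\}}_{jr}}{c^{\{k\}}_{jr}}\;\ge\;\sum_{j,l}y_{jl}\log\frac{\mu^\star_{jl}}{\mu^{\{k\}}_{jl}}\;=\;f(\mat{C}^\star)-f(\mat{C}_{\{k\}})\;\ge\;0 .
$$
Because the iterates are bounded, the left side telescopes to a bounded quantity. Hence $f(\mat{C}_{\{k\}})\to f(\mat{C}^\star)$, and by upper semicontinuity every limit point is a maximizer.

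Now take $\mat{C}^\star$ to be a limit point and consider $\sum_{j,r}s_r\bigl[c^\star_{jr}\log(c^\star_{jr}/c^{\{k\}}_{jr})-c^\star_{jr}+c^{\{k\}}_{jr}\bigr]$. Its summands are nonnegative. It is nonincreasing by the display, since the linear terms cancel because row sums are preserved. It tends to zero along a subsequence, hence along the whole sequence, which forces $\mat{C}_{\{k\}}\to\mat{C}^\star$. No uniqueness of the maximizer is needed.
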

 
The update in Equation \eqref{eq:MMupdate} is multiplicative. Hence, as long as the initial value $\mat{C}_{\{0\}}$ is contained in the constraint set  $(0,\infty)^{J\times R}$, the entire sequence $\{\mat{C}_{\{k\}}\}$ will remain in the constraint set . 
In Theorem \ref{thm:nnstep} denote $\by\transpose_j$ and $\bc\transpose_j$ the $j$-th row of $\mat{Y}$ and $\mat{C}$ respectively. If $\by\transpose_j\bone_{L}=0$  then $f(\mat{C})$ depends on $\bc\transpose_j$ only through $-\bc\transpose_j\mat{D}{\transpose}\bone_L$. Hence, a global maxima is attained at $\bc_j=0$ and it is not contained in the constraint set  $(0,\infty)^R$. In such cases we say that an MLE for $\bc_j$ does not exist (d.n.e). However, under $\mat{Y}\sim\Poi(\mat{C}\mat{D})$ element-wise,  
$$
P(\text{MLE for }\bc_j\text{ d.n.e.}) = \prod_{l=1}^LP(\mat{Y}_{j,l}=0) = e^{-\bc\transpose_j\mat{D}{\transpose}\bone_L}.
$$
Hence, the probability that a MLE for $\bc\transpose_j$ d.n.e decreases exponentially with the magnitudes and dimensions of $(\bc\transpose_j,\mat{D}$). In our regression setting, this probability will decrease exponentially with sample size $I$ and ambient dimensions.

\subsubsection{Estimation of response-sized factors \texorpdfstring{$\mUp{1},\dots, \mUp{P}$}{Lg}}\label{sec:mle_Us}

To estimate the pair $(\blambda,\mUp{p})$ for any $p=1,2,\dots,P$, first let $\mtUp{p} = \mUp{p}\diag(\blambda)$. Then we can define
$$
\widetilde{\mat{G}}_{ip} \coloneqq [\XiB]_{(p)} = \mtUp{p}\mat{G}_{ip},
$$
where $[\cdot]_{(p)}$ denotes the $p$-th mode tensor matricization \citep{BaKo2025},  $\mat{G}_{ip} = \diag(\bw_i)(\odot_{s\neq p} \mUp{s})\transpose$, $\bw_i =(\odot_{q} \mVq{q})\transpose\vecc(\tXi)$, and $\odot$ denotes the Khatri-Rao matrix product \citep{khatriandrao68}. 
This allows us to write the loglikelihood function of Equation \eqref{eq:llikpois}  as a function of $\mtUp{p}$ and fixed matrices $\mat{G}_{ip}$
\begin{equation}\label{eq:ptotr_U}
\begin{aligned}
\ell(\mtUp{p}) &=\!\!
&\sum_{i=1}^I \bone_{M_{p}}\transpose\left[
[\tYi]_{(p)}*\log(\widetilde{\mat{G}}_{ip}) - \widetilde{\mat{G}}_{ip}
\right]\bone_{M_{-p}} \; ,
\end{aligned}
\end{equation}
where $M_{-p} = M/M_p$ and $M=\prod_{s=1}^P M_s$. The above loglikelihood is of the form of $f(\mtUp{p})$ of Theorem \ref{thm:nnstep} after setting $\mat{Y} = \left[ [\tY_{(1)}]_{(p)},\dots,[\tY_{(I)}]_{(p)}] \right]$ and $\mat{D} = [\mat{G}_{1p}, \dots,  \mat{G}_{Ip}]$. Hence, we can use the multiplicative update of Equation \eqref{eq:MMupdate} which simplifies to the following non-decreasing updates 
\begin{equation}\label{eq:MLestU}
\begin{aligned}
\mtUp{p}_{\{k+1\}} \leftarrow
\mtUp{p}_{\{k\}}  &* 
\left\{\sum_{i=1}^I\left[\left(
[\tYi]_{(p)}\oslash(\mtUp{p}_{\{k\}}\mat{G}_{ip})
\right)\mat{G}_{ip}\transpose\right]
\right\}
\\&\oslash
\left\{
\bone_{M_p} (\sum_{i=1}^I \bw_i)\transpose
\right\} \; .  
\end{aligned}
\end{equation}
The above updates can be performed until the change in the loglikelihood, $|\ell(\mtUp{p}_{\{k+1\}}) - \ell(\mtUp{p}_{\{k\}})|$ is below some user-defined threshold.
The updates for $\blambda$ and $\mUp{p}$ come directly from $\mtUp{p}$. Stopping after $K$ iterations, we can then set 
$\widehat \blambda \leftarrow \bone_{M_p}\transpose\mtUp{p}_{\{K\}}$ and 
$\mhUp{p} \leftarrow \mtUp{p}_{\{K\}} \diag(\widehat\blambda)^{-1}$, ensuring that $\widehat\blambda\in\Theta_\lambda$ and $\mhUp{p}\in\mat{\Psi}_p$.

\subsubsection{Estimation of covariate-sized factors \texorpdfstring{$\mVq{1},\dots, \mVq{Q}$}{Lg}}\label{sec:mle_Vs}
To estimate the pair $(\blambda,\mVq{q})$ for any $q=1,2,\dots,Q$, first let $\mtVq{q} = \mVq{q}\diag(\blambda)$. Then we can define 
$$
\widetilde{\mat{H}}_{iq} \coloneqq \vecc(\XiB)= \mat{H}_{iq}\vecc(\mtVq{q}),
$$
where $\mat{H}_{iq}$ is a $\prod_pM_p\times RN_q$ matrix with the same elements rearranged from the $N_q\prod_pM_p\times R$ matrix 
$$
(\odot_{p} \mUp{p})\odot \mat{W}_{iq},\text{ where } \mat{W}_{iq} = [\tXi]_{(q)}(\odot_{s\neq q} \mVq{s}).
$$
This allows us to write Equation \eqref{eq:llikpois}  as a function of $\mtVq{q}$
\begin{equation}\label{eq:ptotr_V}
\ell(\mtVq{q}) =
\sum_{i=1}^I \bone_M\transpose\left[
	\vecc\left(\tYi\right)*\log(\widetilde{\mat{H}}_{iq}) - \widetilde{\mat{H}}_{iq}
	\right],
\end{equation}
where $M = \prod_{p=1}^P M_p$. The above loglikelihood is of the form of $f(\vecc(\mtVq{q})\transpose)$ of Theorem \ref{thm:nnstep} after setting 
$\mat{Y} = [(\vecc(\tY_{(1)})\transpose, \dots, (\vecc(\tY_{(I)})\transpose]$ and $\mat{D} = [\mat{H}_{1q}\transpose, \dots, \mat{H}_{Iq}\transpose]$. Hence, we can use the multiplicative update of Equation \eqref{eq:MMupdate}, which simplifies to the following non-decreasing updates when starting from $\mtVq{q}_{\{0\}} > 0$
\begin{equation}\label{eq:MLestV}
\begin{aligned}
\vecc(\mtVq{q}_{\{k+1\}}) \leftarrow &
\left\{\sum_{i=1}^I\left[\mat{H}_{iq}\transpose\left(
\vecc(\tYi)\oslash\left(\mat{H}_{iq}\vecc(\mtVq{q}_{\{k\}})\right)
\right)\right]
\right\} \\ &
*\vecc\left(\mtVq{q}_{\{k\}}\oslash
\sum_{i=1}^I\mat{W}_{iq} \right).
\end{aligned}
\end{equation}
As with $\mtUp{p}$ in the previous section, the above updates can be performed until the change in the loglikelihood is below some user-defined threshold. If we stop after $K$ iterations, $\blambda$ can be updated as 
$\widehat \blambda \leftarrow \bone_{N_q}\transpose\mtVq{q}_{\{K\}}$ 
and $\mVq{q}$ can be updated as $\mhVq{q} \leftarrow \mtVq{q}_{\{K\}} \diag(\widehat\blambda)^{-1}$. This ensures that $\widehat\blambda\in\Theta_\lambda$ and $\mhVq{q}\in\Omq{q}$.

\subsubsection{Estimation algorithm}

The complete algorithm for maximum likelihood estimation of $\tB$ in a PToTR model is summarized in Algorithm \ref{alg:PToTR}.

\begin{algorithm}[H]
\scriptsize
\begin{algorithmic}[1]
\REQUIRE $\tYi \in \mbbN^{M_1\times \dots \times M_P}$ and $\tXi \in \mbbRnn^{N_1\times \dots \times N_Q}$  $(i=1,2,\dots,I)$);\\
$\mUp{p} \in \mbbRp^{M_p \times R} \; (p=1,\dots,P)$; $\mVq{q}  \in \mbbRp^{N_q \times R} \; (q=1,\dots,Q)$ 
\WHILE{convergence is not met}
\FOR{$p=1$ to $P$}
\STATE $\mtUp{p} \leftarrow \mUp{p} \text{diag}(\hat\blambda)$
\WHILE{convergence is not met}
\STATE    update $\mtUp{p}$ according to Equation \eqref{eq:MLestU}
\ENDWHILE
\STATE $\widehat \blambda \leftarrow \bone_{M_p}\transpose\mtUp{p}, \; \mhUp{p} \leftarrow \mtUp{p} \diag(\widehat\blambda)^{-1}$
\ENDFOR
\FOR{$q=1$ to $Q$}
\STATE $\mtVq{q} \leftarrow \mhVq{q} \text{diag}(\widehat\blambda)$
\WHILE{convergence is not met}
\STATE    update $\mat{\tilde V}_q$ according to Equation \eqref{eq:MLestV}
\ENDWHILE
\STATE $\widehat \blambda \leftarrow \bone_{N_q}\transpose\mtVq{q}, \;  \mhVq{q} \leftarrow \mtVq{q} \diag(\widehat\blambda)^{-1}$
\ENDFOR
\STATE $\thB = [\![\widehat\blambda; \mhVq{1},\dots,\mhVq{Q},\mhUp{1},\dots,\mhUp{P}]\!]$
\IF{convergence of $\thB$ is met} \label{alg:ptotr:Bconv}
\STATE \textbf{break}
\ENDIF
\ENDWHILE
\end{algorithmic}
\caption{Maximum likelihood estimation for PToTR}
\label{alg:PToTR}
\end{algorithm}
We initialize the algorithm with uniformly sampled entries in the constraint set  $\Theta$. This ensures that the regularity conditions of \citep[Theorem 1]{deleeuw94} hold, and hence that our algorithm converges.
We assess convergence of $\thB$ in Line~\ref{alg:ptotr:Bconv} using a relative change in the loglikelihood specified in Equation \eqref{eq:llikpois}. 
We can check that an MLE exists associated with each $\mUp{p}$ ($p=1,\dots,P$) by checking that $(\sum_i[\tYi]_{(p)})\bone_{M_{-p}}$ does not contain zero entries. As described in Section \ref{sec:MLEsubproblem}, the probability of this happening decreases exponentially with the magnitude of $\blambda$, and the sample size $I$. MLEs for $\mVq{q}$ ($q=1,\dots,Q$) will exist as long as a non-zero entry is observed in any observation $\tYi$ ($i=1,\dots,I$).

\subsection{Minimax lower bound}
In this section, we establish the minimax lower bound on the PToTR estimator error. We also provide remarks on the implications and and a proof of the bound.

\begin{theorem}{(Minimax lower bound)}\label{thm:minimax}
	Under the PToTR model of Equation \eqref{eq:ptotr} for responses with equal dimension sizes (${\bar M}\!\coloneqq\! M_1 \!=\! \dots\!=\!M_P$), covariates with equal dimension sizes (${\bar N}\!\coloneqq\! N_1 \!=\! \dots\!=\!N_Q$), and non-negative $\mat{X}=[\vecc(\tX_{(1)})\dots \vecc(\tX_{(I)}))]$, denote $\text{inf}_{\thB}$ as the infimum over all estimators $\thB \in  S_R(\beta,\alpha) $, where
	\begin{equation}\label{eq:minimax_S_R}
		\begin{aligned}
               S_R(\beta,\alpha) &\coloneqq 
               \Big\{ \tT = [\![\blambda; \mat{D}^{(1)},\dots,\mat{D}^{(Q)},\mat{C}^{(1)},\dots,\mat{C}^{(P)}]\!] 
               \\&:\; \mat{D}^{(q)}\in\mbbRp^{\bar{N} \times R}, 
               \mat{C}^{(p)}\in\mbbRp^{\bar{M} \times R},
               \beta<\tT_{\mi{n},\mi{m}}\leq \alpha\Big\}.
		\end{aligned}
	\end{equation}
	Assume that $R\leq \min(\bar{N},\bar{M})$, $J\coloneqq \max(\bar{N},\bar{M})>16$, 
	$\xi \coloneqq \min_i||\tXi||_1>0$, and
	\begin{equation}\label{eq:epsilon_condition}
		\dfrac{\beta\log 2}{(\alpha - \beta)^2} 
		\Big(\dfrac{JR}{16}-1\Big) 
		\dfrac{\xi}{\|\mat{X}\|_2^2}\leq \bar{N}^Q\bar{M}^P,
	\end{equation}
    where $\|\mat{X}\|_2$ denotes the spectral norm of $\mat{X}$. 	Then, 
	$$
	\inf_{\hat \tB}\sup_{\tB\in S_R(\beta,\alpha)}\mbbE(|| \thB-\tB||^2_F)\geq
	\dfrac{\beta\log 2}{128}
	\Big(\dfrac{JR}{16}-1\Big)\dfrac{\xi}{\|\mat{X}\|_2^2}.
	$$
\end{theorem}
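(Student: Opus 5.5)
We prove the bound with Fano's inequality applied to a well-separated packing of $S_R(\beta,\alpha)$, following the Poisson matrix-completion lower bounds (Raginsky et al.; Cao and Xie). Three ingredients are needed. First, a finite family $\{\tB^{(1)},\dots,\tB^{(K)}\}\subset S_R(\beta,\alpha)$ with $\log K \gtrsim JR$. Second, pairwise Frobenius separation $\|\tB^{(k)}-\tB^{(l)}\|_F^2\geq \delta^2$. Third, pairwise Kullback--Leibler divergences between the induced laws of $(\tY_{(1)},\dots,\tY_{(I)})$ that are uniformly bounded by a small multiple of $\log K$. Fano then gives $\inf_{\thB}\sup_{\tB}\mbbE\|\thB-\tB\|_F^2\geq (\delta^2/4)(1-(\max \mathrm{KL}+\log 2)/\log K)$. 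The constants $\log 2$, $JR/16-1$ and $1/128$ in the statement point to Varshamov--Gilbert plus Fano with $K\geq 2^{JR/16}$.

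\textbf{Construction.} Assume without loss of generality that $J=\bar M$; the case $J=\bar N$ is symmetric, with the perturbation placed in a covariate-mode factor. Apply Varshamov--Gilbert to $\{0,1\}^{J\times R}$. This gives binary matrices $\mat{E}^{(k)}$, $k=1,\dots,K$, with $K\geq 2^{JR/8}$ and pairwise Hamming distance at least $JR/8$. We then define
\[
\tB^{(k)}=\beta'\,\mathcal{J}+\gamma\,[\![\bone_R;\mat{A},\dots,\mat{A},\mat{E}^{(k)},\mat{A},\dots,\mat{A}]\!],
\]
where $\mathcal{J}$ is the all-ones tensor (itself rank one, so the total rank is absorbed if we use $R-1$ perturbation columns, or we simply absorb the baseline into the first column) and $\beta'$ is slightly above $\beta$. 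The matrix $\mat{A}$ is a fixed $0/1$ matrix whose $R$ columns have disjoint supports, so that the component tensors for different $r$ have disjoint supports. With disjoint supports each entry of $\tB^{(k)}$ lies in $\{\beta',\beta'+\gamma\}$. Membership in $S_R(\beta,\alpha)$ therefore needs only $\gamma\leq \alpha-\beta$, and the strict positivity of the factors is handled by a vanishing perturbation of the zero entries. The Hamming separation combined with the block sizes $\bar N^Q\bar M^{P-1}/R$ gives $\delta^2 \gtrsim \gamma^2 (JR/8)\,\bar N^Q\bar M^{P-1}/R$. Condition~\eqref{eq:epsilon_condition} is exactly what ensures that the value of $\gamma$ chosen below satisfies $\gamma\leq\alpha-\beta$.

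\textbf{KL bound.} For Poisson laws, $\mathrm{KL}(\Poi(a)\|\Poi(b))\leq (a-b)^2/b$. Since all rates are at least $\beta\xi$ (every entry of $\tB$ exceeds $\beta$ and $\|\tXi\|_1\geq\xi$), we get
\[
\mathrm{KL}\leq \frac{1}{\beta\xi}\sum_i\|\langle \tXi|\tB^{(k)}-\tB^{(l)}\rangle\|_F^2=\frac{1}{\beta\xi}\|(\mat{B}^{(k)}-\mat{B}^{(l)})\mat{X}\|_F^2\leq \frac{\|\mat{X}\|_2^2}{\beta\xi}\|\tB^{(k)}-\tB^{(l)}\|_F^2 .
\]
The covariate side therefore enters only through $\|\mat{X}\|_2^2/\xi$, as in the statement. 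We will bound $\|\tB^{(k)}-\tB^{(l)}\|_F^2$ above by $\gamma^2$ times the total support size. We then choose $\gamma^2$ proportional to $\beta\xi\log 2\,(JR/16-1)/(\|\mat{X}\|_2^2\,\bar N^Q\bar M^P)$, which makes $\mathrm{KL}\leq \tfrac12\log K-\log 2$. Fano then yields a lower bound of order $\delta^2/8$, and substituting $\gamma$ gives $\tfrac{\beta\log 2}{128}(JR/16-1)\xi/\|\mat{X}\|_2^2$.

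\textbf{Main obstacle.} The difficulty is bookkeeping. The $\bar N^Q\bar M^P$ factors in $\delta^2$ and in the KL upper bound must cancel so that the bound is free of ambient size, while the single-factor perturbation still produces Hamming separation on the order of the whole tensor. Having disjoint-support columns and $R\leq\min(\bar N,\bar M)$ is what makes the lower separation match the upper KL bound up to constants. If that cancellation comes out loose, the fallback is to perturb every entry with an $\epsilon$-scaled rank-$R$ sign pattern and redo the constant tracking.
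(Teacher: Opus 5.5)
Your proposal is correct and is essentially the paper's proof. The ingredients match one for one: Varshamov--Gilbert on a $J\times R$ binary pattern spread over disjoint blocks of a second mode (your baseline-plus-disjoint-support tensor is exactly the paper's block repetition $\mat{\Pi}(\mat{C})$ with $\mat{C}=\beta\bone\bone\transpose+\varepsilon(\alpha-\beta)\mat{E}$), the Poisson bound $\mathrm{KL}\le (a-b)^2/b$ with rates at least $\beta\xi$ together with the spectral norm $\|\mat{X}\|_2$, and Fano with $\mathrm{KL}\le(JR/16-1)\log 2$ and $\log K\ge (JR/8)\log 2$, which gives $\delta^2/8$.

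The only bookkeeping to fix is the rank. Put disjoint supports in just one auxiliary mode, let those blocks partition that mode (padding the remainder into one block), and use all-ones vectors in the remaining modes. Then $\beta'\mathcal{J}+\gamma\sum_r(\cdots)=\sum_r(\beta'\bone+\gamma\bm{e}^{(k)}_r)\circ\ba_r\circ\bone\circ\cdots\circ\bone$ has rank $R$ and its support is the whole tensor. Your two suggested remedies do not work as stated: dropping to $R-1$ columns changes the constant, and absorbing $\beta'\mathcal{J}$ into a single column generally does not give a rank-one term.
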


\subsubsection{Remarks}

The term
$
\frac{\beta\log2}{128}\Bigl(\frac{JR}{16}-1\Bigr)
$
is a consequence of the particular choice of packing set over $S_R(\beta,\alpha)$ used in the proof. This term shows that no estimator can achieve worst‐case squared error smaller than a constant multiple of $JR\beta$, meaning that the minimax risk in estimating $\tB$ grows with the low-rank factor dimension $JR$ and not the tensor dimension $\bar{N}^Q\bar{M}^P$.
The multiplicative factor
$\xi/\|\mat{X}\|_2^2$
arises from bounding the KL divergence between probability distributions indexed by elements of the packing set.  Here 
$\xi$ ensures all Poisson rates are bounded away from zero,
while $\|\mat{X}\|_2^2$ measures the maximum amount of PToTR noise that can be amplified when computing the Poisson rates in $\langle \tX_{(i)}|\tB\rangle$. Indeed, because $\|\mat{X}\|_2^2\le\|\mat{X}\|_F^2$, the worst-case squared error decreases proportionally with sample size $I$ and covariate dimensions ${\bar N}^Q$.   

Furthermore, Theorem~\ref{thm:minimax} states that to achieve a target error 
$\mathbb{E}(\|\widehat \tB-\tB\|_F^2)\le\varepsilon$,
 one must necessarily have
$$
\|\mat{X}\|_2^2\geq
\frac{\beta\log2}{128}\Bigl(\frac{JR}{16}-1\Bigr)\,
\frac{\xi}{\varepsilon},
$$
so that no estimator—MLE or otherwise—can evade this sample‐complexity requirement.

Finally, we note that under the special case $I=1$ and $\tX_{(1)}=1$, our PToTR model reduces to the PCP model of \citep{lvdllp-arxiv:25,lopezetal25}. In this case, we have $\xi/\|\mat{X}\|_2^2=1$ and our lower bound reduces to the minimax bound of \cite[Theorem 5]{lopezetal25}. 

In summary, the minimax bound on the PToTR estimator error in Theorem~\ref{thm:minimax} demonstrates that the fundamental difficulty of estimating $\tB$ in PToTR is governed by the factor dimension $JR$ and the spectral norm of the matrix of vectorized covariates $\mat{X}$.

\subsubsection{Proof of Theorem \ref{thm:minimax}}

We first state the Generalized Fano Method \citep[Prop.15.12]{wainwright19}---a standard approach for establishing minimax lower bounds on estimator error---adapted for PToTR. This adaptation is an extension of Theorem 9 and Corollary 3 from \citep{lopezetal25}.
Throughout this proof we will assume that  $\by \coloneqq [\vecc (\tY_{(1)})^{\transpose}
\dots  \vecc (\tY_{(I)})^{\transpose}
]^{\transpose}$.

\begin{theorem}[Generalized Fano Method for PToTR]\label{thm:fano}
Given a finite $\mathcal{F} \subset \mathcal{S}_R(\beta, \alpha)$, as defined in Equation~\eqref{eq:minimax_S_R}, let $G=|\mathcal{F}|$ and denote $p(\by \!\mid\! \tB)$ the PMF of $\by$ under the PToTR model of Equation \eqref{eq:ptotr}. Assume 
\begin{enumerate}
\item (Separation) there exists $\delta > 0$ such that
$$
\min_{k \ne k',\tB^{(k)},\tB^{(k')}\in\mathcal{F}}
\|\tB^{(k)} - \tB^{(k')}\|_F
\ge
\delta \; ; \mbox{and}
$$

\item (KL Control) there exists $\gamma > 0$ such that
\[
\max_{k \ne k',\tB^{(k)},\tB^{(k')}\in\mathcal{F}}
D_{KL}\!\left(
p(\by \mid \tB^{(k)})
\;\|\;
p(\by \mid \tB^{(k')})
\right)
\le
\gamma.
\]
\end{enumerate}

Then for any estimator $\widehat{\tB}\in S_R(\beta, \alpha)$, we have
\begin{equation}\label{eq:fanoineq}
\inf_{\widehat{\tB}}
\sup_{\tB \in\mathcal{S}_R(\beta, \alpha)}
\mbbE
(\|\widehat{\tB} - \tB\|_F^2)
\ge
\frac{\delta^2}{4}
\left(
1 -
\frac{\gamma + \log 2}{\log G}
\right).
\end{equation}
\end{theorem}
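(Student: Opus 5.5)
The statement to prove is Theorem~\ref{thm:fano}, the Generalized Fano Method adapted to PToTR. I would reduce it to the standard Fano inequality \citep[Prop.15.12]{wainwright19}, following the two-step template used in \citep{lopezetal25}: first pass from estimation to multiple testing, then bound the testing error with Fano and the mutual information with the pairwise KL divergences.

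\textbf{Step 1 (reduction to testing).} Write $\mathcal{F}=\{\tB^{(1)},\dots,\tB^{(G)}\}$. The supremum over $S_R(\beta,\alpha)$ is at least the supremum over $\mathcal{F}$. Given any estimator $\thB$, define the minimum-distance test $\psi(\by)=\arg\min_k\|\thB-\tB^{(k)}\|_F$. If $\psi\neq k$ when the truth is $\tB^{(k)}$, the triangle inequality together with the separation assumption gives $\|\thB-\tB^{(k)}\|_F\geq\delta/2$. By Markov's inequality,
$$\sup_{\tB\in\mathcal{F}}\mbbE\|\thB-\tB\|_F^2\geq\frac{\delta^2}{4}\max_k P_k(\psi\neq k)\geq \frac{\delta^2}{4}P(\psi\neq K),$$
where $K$ is uniform on $\{1,\dots,G\}$ and $\by\mid K=k\sim p(\by\mid\tB^{(k)})$.

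\textbf{Step 2 (Fano).} Fano's inequality gives $P(\psi\neq K)\geq 1-(I(K;\by)+\log 2)/\log G$. The mutual information is controlled by convexity of KL in its second argument:
$$I(K;\by)=\frac1G\sum_k D_{KL}\big(p_k\,\|\,\bar p\big)\leq\frac1{G^2}\sum_{k,k'}D_{KL}(p_k\|p_{k'}),\qquad \bar p=\frac1G\sum_{k'}p_{k'}.$$
Diagonal terms vanish and off-diagonal terms are at most $\gamma$ by the KL control assumption, so $I(K;\by)\leq\gamma$. Substituting into Step 1 yields Equation~\eqref{eq:fanoineq}. Taking the infimum over $\thB$ preserves the bound, since Steps 1 and 2 hold for every estimator.

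\textbf{PToTR-specific checks.} The PMFs $p(\by\mid\tB)$ are well defined and mutually absolutely continuous, because every rate $\langle\tXi|\tB\rangle_{\mi m}$ satisfies $\langle\tXi|\tB\rangle_{\mi m}\geq\beta\,\|\tXi\|_1\geq\beta\xi>0$. Hence all KL divergences are finite, and the only real content is bookkeeping.

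\textbf{Main subtlety.} The delicate point is the restriction that $\thB\in S_R(\beta,\alpha)$. The argument never uses it, because the min-distance test is defined for an arbitrary $\thB$, so the restriction is harmless. If Wainwright's version is stated with $\delta$ as a $2\delta$-separation, I would adjust constants so that the factor is $\delta^2/4$ as claimed.
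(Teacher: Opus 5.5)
Your proposal is correct and is the same argument the paper relies on. The paper does not reproduce a proof; it cites Wainwright's Proposition 15.12 and Eq.\ (15.34), which consist of the reduction to a $G$-ary test via the minimum-distance rule (a $\delta$-separated set gives the factor $(\delta/2)^2=\delta^2/4$), Fano's inequality, and the convexity bound $I(K;\by)\le G^{-2}\sum_{k,k'}D_{KL}(p_k\|p_{k'})\le\gamma$. You have written these steps out explicitly, and you correctly note that restricting $\widehat{\tB}$ to $S_R(\beta,\alpha)$ only shrinks the class in the infimum, so the lower bound is unaffected.
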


Theorem~\ref{thm:fano} is a direct translation of the method as described in Proposition 15.12 and Equation 15.34 of~\citet{wainwright19}, with $\mathcal{F}$ representing the $\delta$-separated set, $\Phi(\delta) = \delta^2$, and each  $\tB^{(j)}$ associated with a PToTR probability distribution across all the entries of $\by$. 
Thus, the proof of Theorem~\ref{thm:fano} follows from \citep[\S15.4]{wainwright19} and is not reproduced here in the interest of space.

To apply the Generalized Fano Method to PToTR, we first define a packing set for $S_R(\beta,\alpha)$ from Equation~\eqref{eq:minimax_S_R} in Lemmas~\ref{lemma:vgb}~and~\ref{lemma:pack} below. Next, we establish a bound on the KL divergence between elements of that packing set in Lemma~\ref{lemma:boundKL}.

\begin{lemma}[Varshamov-Gilbert Bound {\citep[Lemma~7]{wangandli20}}]\label{lemma:vgb}
    Let $\Omega = \left\{(w_1, w_2, \ldots, w_m) \ | \ w_i \in \{0, 1\}\right\} \subseteq \mbbR^m$, and take $m > 8$. 
    Then, there exists a subset $\left\{\bw^{(0)}, \bw^{(1)}, \ldots, \bw^{(L)}\right\}\subset\Omega$ such that $\bw^{(0)}$ is the zero vector and for $0 \leq k < k' \leq L$,
    $$\left\| \bw^{(k)} - \bw^{(k')} \right\|_0 \geq \frac{m}{8},$$
    where $\| \cdot \|_0$ denotes Hamming distance and $L \geq 2^{m / 8}$. 
\end{lemma}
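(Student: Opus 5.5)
The plan is a greedy (maximal packing) construction combined with a volume bound on Hamming balls. I do not expect a probabilistic argument to be needed. Set $d \coloneqq \lceil m/8\rceil$.

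First, build $\mathcal{C}\subseteq\Omega$ greedily, starting from $\mathcal{C}=\{\bw^{(0)}\}$ with $\bw^{(0)}$ the zero vector. Repeatedly add any $\bw\in\Omega$ whose Hamming distance to every current element of $\mathcal{C}$ is at least $d$, and stop when no such $\bw$ remains. Since $\Omega$ is finite, this terminates.

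By construction, all pairs in $\mathcal{C}$ are at distance at least $d\geq m/8$, and $\bw^{(0)}=\mathbf{0}$ is included, as required. By maximality, every $\bw\in\Omega$ lies within distance $d-1$ of some element of $\mathcal{C}$. Hence the Hamming balls of radius $d-1$ centred at the elements of $\mathcal{C}$ cover $\Omega$, which gives
\[
|\mathcal{C}|\;\geq\; \frac{2^m}{V(d-1)},\qquad V(r)\coloneqq\sum_{j=0}^{r}\binom{m}{j}.
\]

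Second, bound the ball volume. Since $d-1<m/8$, it suffices to show $V(\lfloor m/8\rfloor)\leq 2^{mH(1/8)}$, where $H$ is the binary entropy function. I will use the standard binomial-theorem argument with $\lambda=1/8\leq 1/2$:
\[
1=(\lambda+(1-\lambda))^m\;\geq\;\sum_{j\leq \lambda m}\binom{m}{j}\lambda^j(1-\lambda)^{m-j}\;\geq\;\lambda^{\lambda m}(1-\lambda)^{(1-\lambda)m}\sum_{j\le\lambda m}\binom{m}{j}.
\]
The second inequality holds because $\lambda/(1-\lambda)\leq 1$ makes $\lambda^j(1-\lambda)^{m-j}$ decreasing in $j$, so each term is at least its value at $j=\lambda m$. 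Rearranging gives $V\leq 2^{mH(\lambda)}$. Therefore $|\mathcal{C}|\geq 2^{m(1-H(1/8))}$, and numerically $1-H(1/8)\approx 0.456$.

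Third, conclude. Writing $|\mathcal{C}|=L+1$, it remains to check
\[
2^{0.456m}-1\;\geq\; 2^{m/8}\qquad\text{for } m>8.
\]
This follows because $2^{0.456m}-2^{m/8}\geq 2^{m/8}\bigl(2^{0.331m}-1\bigr)\geq 1$ once $m\geq 9$.

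The only delicate point is the bookkeeping between the integer radius $d-1$ and the non-integer threshold $m/8$, together with this final numerical margin. Both are comfortable here, since $1-H(1/8)$ exceeds $1/8$ by a wide margin. The hypothesis $m>8$ is used only to guarantee $d\geq 2$, so the packing is nontrivial, and to make the last inequality hold.
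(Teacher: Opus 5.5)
Your proof is correct: the maximal-packing construction, the covering bound $|\mathcal{C}|\ge 2^m/V(d-1)$, the entropy estimate $V(\lfloor m/8\rfloor)\le 2^{mH(1/8)}$, and the final check $2^{0.456m}-1\ge 2^{m/8}$ for $m\ge 9$ all go through. The paper gives no proof of its own and simply imports the lemma from \citep[Lemma~7]{wangandli20}. The standard proof of that result (e.g., Tsybakov's Lemma~2.9) is exactly your greedy covering argument, except that it bounds the Hamming-ball volume with Hoeffding's inequality, $\sum_{j\le m/8}\binom{m}{j}\le 2^m e^{-9m/32}$, where you use the entropy bound; yours gives the slightly sharper exponent $1-H(1/8)\approx 0.456$ versus $9/(32\ln 2)\approx 0.406$, and either suffices.
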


\begin{lemma}[Minimax Packing Set]\label{lemma:pack}
Suppose $R \le \min(\bar{N},\bar{M})$ and let $J := \max(\bar{N},\bar{M})$.
For a constant $\varepsilon \in (0,1]$ and bounds $0 < \beta \le \alpha$,  there exists a finite set $\mathcal{F} \subseteq \mathcal{S}_R(\beta,\alpha)$ such that $|\mathcal{F}| \ge 2^{J R}$ , and for any distinct $\tB^{(k)}, \tB^{(k')} \in \mathcal{F}$,
\begin{equation}\label{eq:Upack}
\|\tB^{(k)} - \tB^{(k')}\|_F
\ge
\frac{\varepsilon}{4}
(\alpha - \beta)
\sqrt{\bar{N}^Q \bar{M}^P},
\end{equation}
and
\begin{equation}\label{eq:Lpack}
\|\tB^{(k)} - \tB^{(k')}\|_F
\le
\varepsilon
(\alpha - \beta)
\sqrt{\bar{N}^Q \bar{M}^P}.
\end{equation}
\end{lemma}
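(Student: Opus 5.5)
The plan is to build $\mathcal F$ from binary codes via Lemma~\ref{lemma:vgb}, using a rank-$R$ construction in which a single factor matrix carries the code and every other mode only spreads it. I expect the separation and upper bounds \eqref{eq:Upack}--\eqref{eq:Lpack} to go through, but the construction as I see it yields $|\mathcal F|\ge 2^{JR/8}$, not the stated $2^{JR}$. I do not have a proof of the cardinality as stated.

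\textbf{Construction.} Without loss of generality let $J=\bar N$ be attained by the first covariate mode; the case $J=\bar M$ is identical with a response mode. Since $R\le\min(\bar N,\bar M)$, partition the index set of a second mode (a covariate mode if $Q\ge2$, otherwise the first response mode) into $R$ nonempty blocks $\mathcal I_1,\dots,\mathcal I_R$ of near-equal size. For a binary matrix $\mat{W}\in\{0,1\}^{J\times R}$ define the tensor
\[
\tB^{\mat{W}}_{\mi n,\mi m}=\beta'+\varepsilon(\alpha-\beta)\,\mat{W}_{n_1,r(n_2)},\qquad r(n_2)=r \iff n_2\in\mathcal I_r,
\]
with $\beta'$ chosen strictly above $\beta$ so that $\beta'+\varepsilon(\alpha-\beta)\le\alpha$. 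This works because $\varepsilon\le1$ and the margin can be taken arbitrarily small. Every entry then lies in $(\beta,\alpha]$. Write
\[
\tB^{\mat{W}}=\sum_r \bigl(\beta'\bone+\varepsilon(\alpha-\beta)\bw_r\bigr)\circ \bone_{\mathcal I_r}\circ\bone\circ\dots\circ\bone .
\]
This has CP rank at most $R$, with the weights absorbed into $\blambda$ after column normalization. The zero entries of the block indicators $\bone_{\mathcal I_r}$ must be replaced by a tiny $\eta>0$ to meet strict positivity of the factors in $S_R(\beta,\alpha)$. I would handle this by continuity: every inequality below is strict up to the factor $1/4$, so it survives $\eta\to0$.

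\textbf{Distances.} For two codewords $\mat{W},\mat{W}'$, each position where they differ contributes the same number of tensor entries, namely $\approx \bar N^{Q}\bar M^P/(JR)$. So
\[
\|\tB^{\mat{W}}-\tB^{\mat{W}'}\|_F^2 \approx \varepsilon^2(\alpha-\beta)^2\,\bar N^Q\bar M^P\,\frac{\|\mat{W}-\mat{W}'\|_0}{JR}.
\]
The upper bound \eqref{eq:Lpack} follows immediately from $\|\mat{W}-\mat{W}'\|_0\le JR$. Block imbalance costs at most a factor $2$, which is absorbed into the slack. The lower bound \eqref{eq:Upack} follows whenever the Hamming distance is at least $JR/8$, since then the ratio is $\ge (1/8)\cdot(1/2)$ and taking the square root gives $\ge 1/4$. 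This is exactly what Lemma~\ref{lemma:vgb} supplies with $m=JR>8$, guaranteed since $J>16$.

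\textbf{Main obstacle: the cardinality.} Lemma~\ref{lemma:vgb} with $m=JR$ gives only $|\mathcal F|\ge 2^{JR/8}$. This matches the factor $(JR/16-1)$ in Theorem~\ref{thm:minimax} through $\log G\ge (JR/8)\log 2$, but it falls short of the stated $2^{JR}$. To close the gap I would first try to enlarge the code length to $m=8JR$ by letting more factors carry bits: give each of the $P+Q$ modes its own binary $J\times R$ block, and arrange the construction so that contributions from different modes stay additive on disjoint entry sets. The separation computation above would then hold unchanged per bit. I expect this step to fail in general. Additive contributions across modes raise the rank above $R$, and the Euclidean volume bound in $[0,1]^{JR}$ at separation ratio $1/4$ does not even guarantee $2^{JR}$ points. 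If the lifting cannot be made to respect rank $R$, the argument only delivers $|\mathcal F|\ge 2^{JR/8}$, and the lemma holds only in that form.
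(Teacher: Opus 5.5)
Your construction is the paper's own. The paper also takes a two-level $J\times R$ code matrix with levels $\beta$ and $\beta+\varepsilon(\alpha-\beta)$ and applies Lemma~\ref{lemma:vgb} with $m=JR$. It spreads the code by block repetition $\mat{\Pi}(\mat{C})$ along a mode of size $\min(\bar N,\bar M)$; it places this across the last covariate mode and the first response mode, where you use two covariate modes. It then uses $\lfloor x\rfloor/x\ge 1/2$ to get the factor $1/16$ and bounds the upper distance entrywise.

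The cardinality shortfall you flag is real, and it is a defect of the paper's proof, not of yours. The paper asserts $|\widetilde{\mathcal C}|\ge 2^{JR}$ ``by Lemma~\ref{lemma:vgb}'', but that lemma gives only $L\ge 2^{m/8}=2^{JR/8}$. No subset of $\mathcal C$ could do better. Since $|\mathcal C|=2^{JR}$ exactly, $|\widetilde{\mathcal C}|\ge 2^{JR}$ would force $\widetilde{\mathcal C}=\mathcal C$, which contains pairs at Hamming distance $1<JR/8$.

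Moreover, $|\mathcal F|\ge 2^{JR/8}$ is exactly what Theorem~\ref{thm:minimax} needs. With $\gamma=(JR/16-1)\log 2$ and $\log G\ge (JR/8)\log 2$, one gets $(\gamma+\log 2)/\log G\le 1/2$, which is the inequality used there. The constant $JR/16-1$ is calibrated to this bound. So drop the $8JR$-bit lifting and state the lemma with $2^{JR/8}$.

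Your $\beta'$ shift and $\eta$ perturbation address two genuine defects that the paper passes over:
\begin{itemize}
\item entries equal to $\beta$ violate $\beta<\tT_{\mi{n},\mi{m}}$;
\item $\mat{\Pi}(\mat{C})=\mat{C}\mat{S}$, with $\mat{S}$ a $0/1$ selection matrix, has zero factor entries.
\end{itemize}
As written, though, your fixes have a small hole. At $\varepsilon=1$ the condition $\beta'+\varepsilon(\alpha-\beta)\le\alpha$ forces $\beta'\le\beta$. Also, replacing the zeros of the block indicators by $\eta$ raises entries, so any entry sitting at $\alpha$ leaves $S_R(\beta,\alpha)$.

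Both problems are repaired by the strict slack you already noticed. For a spreading-mode size $s\ge R$ one has $R\lfloor s/R\rfloor/s>1/2$. You may therefore shrink the level gap to $(1-\eta')\varepsilon(\alpha-\beta)$ and place both levels strictly inside $(\beta,\alpha)$ without losing the factor $1/4$.

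Positivity of the factors can even be obtained without changing the tensor. Every column of $\mat{S}$ has a single one, so $\bone_R\transpose\mat{S}=\bone\transpose$. Hence $\mat{C}\mat{S}=\bigl[\mat{C}(\mat{I}+\eta\bone_R\bone_R\transpose)^{-1}\bigr]\bigl[(\mat{I}+\eta\bone_R\bone_R\transpose)\mat{S}\bigr]$, and both brackets are entrywise positive for small $\eta>0$.
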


\begin{proof}
Define 
$
\mathcal{C}
=
\left\{
\mat{C}\in \mbbRp^{J \times R}
:
\mat{C}_{j,r} \in \{\beta,\,
\beta + \varepsilon(\alpha-\beta)\}
\right\}.
$
Any matrix $\mat{C}\in\mathcal{C}$ can be identified by a binary $JR-$dimensional vector, and hence by the Varshamov–Gilbert bound of Lemma \ref{lemma:vgb}, 
there exists $\widetilde{\mathcal{C}} \subset \mathcal{C}$
such that $
|\widetilde{\mathcal{C}}|
\ge
2^{J R}
$
and for distinct $\mat{C}^{(k)}, \mat{C}^{(k')} \in \widetilde{\mathcal{C}}$,
\begin{equation*}
\|\mat{C}^{(k)} - \mat{C}^{(k')}\|_0
\ge
\frac{J R}{8}.
\end{equation*}
Since each differing entry contributes
$\varepsilon(\alpha-\beta)$
to the Frobenius norm, we have
\begin{equation*}
\|\mat{C}^{(k)} - \mat{C}^{(k')}\|_F^2
\ge
\varepsilon^2(\alpha-\beta)^2
\frac{J R}{8}.
\end{equation*}
Now, let $\mat{\Pi}(\mat{C})$ denote the matrix obtained by repeating the matrix $\mat{C}$
horizontally a fixed number of times (and padding with the first column of $\mat{C}$ if necessary)
so that $\mat{\Pi}(\mat{C}) \in \mbbRp^{J \times \min(\bar{N},\bar{M})}$. Because $\mat{\Pi}(\mat{C})$ consists of at most $\min(\bar{N},\bar{M})/R$ repeated blocks of $\mat{C}$, we have 
\begin{equation}\label{eq:proof_AofC}
\begin{aligned}
\|\mat{\Pi}(\mat{C}^{(k)}) - \mat{\Pi}(\mat{C}^{(k')})\|_F^2
&\ge
\Big\lfloor \frac{\min(\bar{N},\bar{M})}{R}\Big\rfloor
\|\mat{C}^{(k)} - \mat{C}^{(k')}\|_F^2 \\
&\ge
\frac{\min(\bar{N},\bar{M})}{2R}
\varepsilon^2(\alpha-\beta)^2
\frac{J R}{8}
\\&=
\dfrac{\bar{N}\bar{M}}{16}\varepsilon^2(\alpha-\beta)^2
\end{aligned}
\end{equation}
where we used $\lfloor x\rfloor/x\geq 1/2$ for $x\geq 1$ and $J\min(\bar{N},\bar{M}) = \bar{N}\bar{M}$.
Now we can introduce $\mathcal{F}$. For $\mat{\Pi}^*(\mat{C}) = \mat{\Pi}(\mat{C})$ if $\bar{M}<\bar{N}$ and $\mat{\Pi}^*(\mat{C}) = (\mat{\Pi}(\mat{C}))^\top$ otherwise, define 
 $$
\mathcal{F}
=
\{\bone_{\bar N} \underbrace{\circ\dots\circ}_{Q-1 \text{ times}} \bone_{\bar N} 
\circ  \mat{\Pi}^*(\mat{C}) \circ 
\bone_{\bar M} \underbrace{\circ\dots\circ}_{P-1 \text{ times}} \bone_{\bar M} 
\;:\;
\mat{C}\in \widetilde{\mathcal{C}}
\}.
$$
Clearly, $
|\mathcal{F}|
=
|\widetilde{\mathcal{C}}|
\ge
2^{J R}.
$
Since any $\mat{C}\in\widetilde{\mathcal{C}}$ is of rank at most $R$ and $\mat{\Pi}(\mat{C})$ is formed by block repetition of $\mat{C}$, $\mat{\Pi}(\mat{C})$ is also of rank at most $R$, and therefore tensors in $\mathcal{F}$ are also of rank at most $R$. Furthermore, since tensors in $\mathcal{F}$ have entries $\beta,\beta + \varepsilon(\alpha-\beta)\in[\beta,\alpha]$, we have that 
$\mathcal{F}\subseteq \mathcal{S}_R(\beta,\alpha)$. To establish the lower bound, for distinct $\tB^{(k)}, \tB^{(k')} \in \mathcal{F}$ with corresponding $\mat{C}^{(k)},\mat{C}^{(k')}\in \widetilde{\mathcal{C}}$, from Equation \eqref{eq:proof_AofC} we have  
\begin{equation*}
\begin{aligned}
\|\tB^{(k)} - \tB^{(k')}\|_F^2
&=
\bar{N}^{Q-1}\bar{M}^{P-1}\|\mat{\Pi}(\mat{C}^{(k)}) - \mat{\Pi}(\mat{C}^{(k')})\|_F^2
\\&\ge 
\dfrac{\bar{N}^Q\bar{M}^P}{16}\varepsilon^2(\alpha-\beta)^2.
\end{aligned}
\end{equation*}
The lower bound follows directly since
entries differ by at most
$\varepsilon(\alpha-\beta)$
over at most $\bar{N}^Q \bar{M}^P$ positions.
\end{proof}

\begin{lemma}[KL Divergence Bound]\label{lemma:boundKL}
Consider the model $\byi \sim\text{Poisson}(\mat{B} \bxi)$ of Equation \eqref{eq:mpoisreg}, where all the entries of $\by = [\by_{(1)},\dots,\by_{(I)}]^\top$ are independent of each other, $\bxi\in\mbbRnn^L$, and call $\mat{X} = [\bx_{(1)},\dots,\bx_{(I)}]$. Suppose that $\xi := \min_i \|\bxi\|_1 > 0$
and that for some $\beta>0$,
$$
\mat{B}\in S(\beta)\coloneqq \{ \mat{C} \in\mbbRp^{J\times L}| \mat{C}_{j,l} \geq \beta\}.
$$
For $\mat{B}^{(k)} \!=\! [\bb_{1}^{(k)},\dots,\bb_{J}^{(k)}]^\top,\mat{B}^{(k')} \!=\! [\bb_{1}^{(k')},\dots,\bb_{J}^{(k')}]^\top\in S(\beta)$ that are distinct, we have the following bound on the KL divergence
\begin{equation*}
D_{KL}\!\left( p(\by\mid \mat{B}^{(k)})\,\|\, p(\by\mid \mat{B}^{(k')}) \right)
\le
\frac{\|\mat{X}\|_2^2}{\beta \xi}
\|\mat{B}^{(k)} - \mat{B}^{(k')}\|_F^2 .
\end{equation*}
\end{lemma}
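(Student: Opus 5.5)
The plan is to bound the KL divergence by a chi-square-type quantity, entry by entry, and then aggregate using the covariate matrix $\mat{X}$. Because the entries of $\by$ are independent, the KL divergence splits as a sum over $i$ and $j$ of KL divergences between univariate Poisson laws with rates $\mu_{ij}=\bb_j^{(k)\top}\bxi$ and $\nu_{ij}=\bb_j^{(k')\top}\bxi$. For Poisson laws we have the closed form
$$D_{KL}(\Poi(\mu)\|\Poi(\nu))=\mu\log(\mu/\nu)-\mu+\nu .$$
Using $\log x\le x-1$, this is at most
$$\mu(\mu-\nu)/\nu-\mu+\nu=(\mu-\nu)^2/\nu .$$
Therefore
$$D_{KL}\le\sum_{i,j}\frac{\big((\bb_j^{(k)}-\bb_j^{(k')})^\top\bxi\big)^2}{\nu_{ij}} .$$

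The next step lower-bounds the denominators. Since every entry of $\mat{B}^{(k')}$ is at least $\beta$ and $\bxi\ge0$, we get
$$\nu_{ij}=\sum_l \mat{B}^{(k')}_{j,l}(\bxi)_l\ge\beta\|\bxi\|_1\ge\beta\xi .$$
Hence
$$D_{KL}\le\frac{1}{\beta\xi}\sum_{i,j}\big((\bb_j^{(k)}-\bb_j^{(k')})^\top\bxi\big)^2=\frac{1}{\beta\xi}\big\|(\mat{B}^{(k)}-\mat{B}^{(k')})\mat{X}\big\|_F^2 .$$
This is the only place where nonnegativity of the covariates and the condition $\xi>0$ enter.

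Finally, I would apply the standard inequality $\|\mat{A}\mat{X}\|_F\le\|\mat{A}\|_F\|\mat{X}\|_2$. It holds because
$$\|\mat{A}\mat{X}\|_F^2=\sum_j\|\mat{X}^\top\ba_j\|_2^2\le\|\mat{X}\|_2^2\sum_j\|\ba_j\|_2^2 ,$$
where $\ba_j^\top$ are the rows of $\mat{A}$. Taking $\mat{A}=\mat{B}^{(k)}-\mat{B}^{(k')}$ gives the claimed bound.

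No step is a serious obstacle. The one point needing care is which rate lands in the denominator: the bound $(\mu-\nu)^2/\nu$ places the second argument's rate there, so the lower bound $\beta$ must be applied to $\mat{B}^{(k')}$. That is permitted because both matrices lie in $S(\beta)$.
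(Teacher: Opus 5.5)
Your proposal is correct and follows the paper's proof essentially step for step. Both arguments split the KL divergence over independent Poisson entries, bound it by $(\mu-\nu)^2/\nu$ via $\log x\le x-1$, lower-bound the second argument's rates by $\beta\|\bxi\|_1\ge\beta\xi$, and finish with $\|\mat{X}^\top\ba_j\|_2\le\|\mat{X}\|_2\|\ba_j\|_2$ applied row by row; the paper writes that last step as the quadratic form $(\bb_d^{(k)}-\bb_d^{(k')})^\top\mat{X}\mat{X}^\top(\bb_d^{(k)}-\bb_d^{(k')})$, which is the same thing.
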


\begin{proof}
Denote the $d$-th entry of the vector $\byi$ as $y_{(i)d}$. Using independence in $\by$ and $\mbbE_{\mat{B}^{(k)}}(y_{(i)d}) = {\bxi^\top \bb_{d}^{(k)}}$, we have
\begin{equation*}
\begin{aligned}
D_{KL}&(p(\by\mid \mat{B}^{(k)})\|p(\by\mid \mat{B}^{(k')}))
\coloneqq 
\mbbE_{\mat{B}^{(k)}}\Big(\log\dfrac{p(\by\mid \mat{B}^{(k)})}{p(\by\mid \mat{B}^{(k')})}\Big)
\\&=
\sum_{d,i}
\left[
{\bxi^\top \bb_{d}^{(k)}}
\log\!\left(\frac{{\bxi^\top \bb_{d}^{(k)}}}{{\bxi^\top \bb_{d}^{(k')}}}\right)
- {\bxi^\top \bb_{d}^{(k)}}
+ {\bxi^\top \bb_{d}^{(k')}}
\right] \\
&\le
\sum_{d,i}
\frac{({\bxi^\top \bb_{d}^{(k)}} - {\bxi^\top \bb_{d}^{(k')}})^2}{{\bxi^\top \bb_{d}^{(k')}}}\\
&\le
\frac{1}{\beta \xi}
\sum_{d,i}
({\bx_i^\top \bb_{d}^{(k)}} - {\bx_i^\top \bb_{d}^{(k')}})^2 \\
&=
\frac{1}{\beta \xi}
\sum_d
(\bb_{d}^{(k)}-\bb_{d}^{(k')})^\top
\mat{X} \mat{X}^\top
(\bb_{d}^{(k)}-\bb_{d}^{(k')}) \\
&\le
\frac{\|\mat{X}\|_2^2}{\beta \xi}
\sum_d \|\bb_{d}^{(k)}-\bb_{d}^{(k')}\|_2^2\\
&=
\frac{\|\mat{X}\|_2^2}{\beta \xi}
\|\mat{B}^{(k)} - \mat{B}^{(k')}\|_F^2,
\end{aligned}
\end{equation*}
where the first inequality used $\log(x) \leq x-1$ and rearranged terms, the second inequality used 
$$
{\bxi^\top \bb_{d}^{(k)}}\geq \beta \bone_L^\top \bxi = \beta||\bxi||_1\geq \beta\xi,
$$ and the third inequality used that $||\mat{X}\transpose\bc||_2
\!\leq\! ||\mat{X}||_2||\bc||_2$ for any $\bc$, which results from the definition of spectral norm.
\end{proof}

With these lemmas now in hand, we are now ready to prove Theorem~\ref{thm:minimax}.
First, Lemma \ref{lemma:pack} states that $\mathcal{F} \subseteq \mathcal{S}_R(\beta,\alpha)$ and it is finite with $G\coloneq |\mathcal{F}|\ge 2^{J R}$. Furthermore, according to Equation \eqref{eq:Upack} the separation condition of Theorem \ref{thm:fano} is satisfied with $\delta = \frac{\varepsilon}{4}
(\alpha - \beta)
\sqrt{\bar{N}^Q \bar{M}^P}$. Because our PToTR model can be written in the vector-response, vector-covariate form of Equation \eqref{eq:mpoisreg}, we can invoke Lemma \ref{lemma:boundKL} to obtain an upper bound on the KL divergence. That is, for any distinct $\tB^{(k)}, \tB^{(k')} \in \mathcal{F}$, we have 
\begin{equation*}
    \begin{aligned}
    D_{KL}\!\big(
p(\by \mid \tB^{(k)})
\;\|\;
p(\by \mid &\tB^{(k')})
\big)
\le
\frac{\|\mat{X}\|_2^2}{\beta \xi}
\|\tB^{(k)} - \tB^{(k')}_2\|_F^2
\\& \le
\underbrace{\frac{\|\mat{X}\|_2^2}{\beta \xi}
\varepsilon^2
(\alpha - \beta)^2
\bar{N}^Q \bar{M}^P}_{\gamma},
    \end{aligned}
\end{equation*}
where the first inequality is a result of Lemma \ref{lemma:boundKL} and the second inequality is from Equation \eqref{eq:Lpack}. Therefore, Equation \eqref{eq:fanoineq} holds for these values of $\delta,\gamma$, and any value of $\varepsilon \in (0,1)$. 

Let
$$
\varepsilon^2 = 
\dfrac{\beta\log 2}{\bar{N}^Q\bar{M}^P(\alpha - \beta)^2} 
\Big(\dfrac{JR}{16}-1\Big) 
\dfrac{\xi}{\|\mat{X}\|_2^2},
$$
which satisfies $\varepsilon \in (0,1)$ due to $JR>16$ and Equation \eqref{eq:epsilon_condition}. Therefore, we have that 
$\gamma = (\frac{JR}{16}-1) \log 2$
and hence 
$$
\frac{\gamma + \log 2}{\log G}
\leq 
\frac{\gamma + \log 2}{JR\log 2}
\leq \dfrac{1}{2}.
$$
Substituting $\varepsilon^2$ into Equation  \eqref{eq:fanoineq} leads to
\begin{equation*}
\begin{aligned}
\inf_{\widehat{\tB}}
\sup_{\tB \in \mathcal{F}}
\mbbE
(\|\widehat{\tB} - \tB\|_F^2)
&\ge
\frac{\delta^2}{4}
\left(
1 -
\frac{\gamma + \log 2}{\log G}
\right).
\\&\ge
\frac{\delta^2}{8}
=
\frac{\varepsilon^2}{128}
(\alpha - \beta)^2
\bar{N}^Q \bar{M}^P
\\&=\dfrac{\beta\log 2}{128}
	\Big(\dfrac{JR}{16}-1\Big)\dfrac{\xi}{\|\mat{X}\|_2^2},
\end{aligned}
\end{equation*}
which completes the proof. \qed

With an estimation algorithm and minimax bounds for PToTR, we are now ready to revisit the motivating application problems of Section \ref{sec:motivation}.

\section{Applications}

\subsection{Longitudinal relational data analysis}\label{sec:autoregressive}
Section \ref{sec:motiv_longitudinal} laid out a tensor autoregressive model for predicting future actions in the ICEWS database. In general for longitudinal relational data analysis, we encode relations across $M_3$ kinds of actions that $M_1$ objects have towards (possibly the same) $M_2$ objects as a time series of tensors $\{\tYt\}$, where each $\tYt$ is a tensor of size $M_1 \times M_2 \times M_3$. The entries of the tensor $\tYt$ represent the directed actions that occur across all the pairs of objects (dyads). For example, in Section \ref{sec:motiv_longitudinal}, the entry $\tYtm$, where $\mi{m}=(m_1,m_2, m_3)$, is a count of the number of times the action $m_3$ was taken by country $m_1$ towards country $m_2$ during week $t$.

 Hoff introduced an instance of ToTR in order to analyze such longitudinal relational data~\citet{hoff15}. His analysis assumed Gaussian responses, and hence applied quantile-to-quantile transformations on the count data before fitting a ToTR model. In this section, we demonstrate the use of PToTR on such data, which allows us to 1) model the data as Poisson random variables without the need to perform lossy transformations, and 2) utilize a more flexible model of the regression coefficients.

\subsubsection{Methodology} 

One approach towards the analysis of longitudinal data is the autoregressive model, which models a time series at time $t$ as a function of data at previous time $t-1$.  In the context of PToTR we can write a Poisson-response autoregressive model as
\begin{equation}\label{eq:autoreg1}
    \tYt \sim \Poi (\langle \tXt | \tB \rangle) ,\quad t = 1,2,\dots,T,
\end{equation}
where $\tXt = \tYtmo$ but will be modified later to incorporate other longitudinal effects. Here $\tB$ contains information regarding how an action at time $t$ is affected by the actions that occurred at time $t-1$. For example, for a given pair of countries $(m_1,m_2)$ and action $m_3$ we have the following conditional expectation
\begin{equation}\label{eq:autoreg2}
    \mbbE(\tYtm | \tYtmo) = \sum_{\mi{n}} \tBnm\tYtmon.
\end{equation}
Hence, $\tBnm$ describes the effect that $\tYtmon$ has on $\tYtm$, or how action $n_3$ by country $n_1$ towards $n_2$ at one time affects action $m_3$ by country $m_1$ towards $m_2$ at the next time. In other words, the tensor $\tB$ is very rich in information, as it describes all the factors and their interactions at the cost of $M_1M_2M_3N_1N_2N_3=M_1^2M_2^2M_3^2$ parameters, which makes estimation intractable unless the number of temporal observations $T$ is extremely large. 

\paragraph{Parameter reduction and previous work}

The simplest multiplicative model that describes $\tB$ with only $2(M_1+M_2+M_3)$ parameters assumes that Equation \eqref{eq:autoreg2} holds with
\begin{equation}\label{eq:simplemult}
    \tBnm = \bv^{(1)}_{n_1} \bv^{(2)}_{n_2} \bv^{(3)}_{n_3} 
    \bu^{(1)}_{m_1}  \bu^{(2)}_{m_2}  \bu^{(3)}_{m_3} ,
\end{equation}
where the vectors $\bu^{(p)},\bv^{(p)}\in\mbbRp^{M_p}$ ($p=1,2,3$). This multiplicative model is generally too simple because, for example, it does not consider the effect that actions taken by country $n_1$ (contained in $\bv^{(1)}_{n_1}$) have on the previous actions taken by country $m_1$ (contained in $\bu^{(1)}_{m_1}$).
Such interactions were considered  in~\citet{hoff15} which assumed that Equation \eqref{eq:autoreg2} holds with $\tBnm = \mat{U}^{(1)}_{n_1,m_1} \mat{U}^{(2)}_{n_2,m_2} \mat{U}^{(3)}_{n_3,m_3}$, where each $\mat{U}^{(p)}$ is a square matrix of size $M_p \times M_p$ ($p=1,2,3$). This means that the effect captured by $\tBnm$ is assumed to be the multiplicative effect of three components: $ \mat{U}^{(1)}_{n_1,m_1}$ that describes how the actions of $m_1$ are influenced by the previous actions of $n_1$, $\mat{U}^{(2)}_{n_2,m_2}$ that describes how the actions towards $m_2$ are influenced by the previous actions towards $n_2$, and $\mat{U}^{(3)}_{n_3,m_3}$ that describes how the actions towards $m_3$ are influenced by the previous actions towards $n_3$. This multiplicative model generalizes that of Equation
\eqref{eq:simplemult}, which becomes the special case where $\mat{U}^{(1)},\mat{U}^{(2)},\mat{U}^{(3)}$ are rank one matrices. This multiplicative model reduces the overall number of parameters involved in $\tB$ from $M_1^2M_2^2M_3^2$ to $M_1^2+M_2^2+M_3^2$, which makes estimation possible with many fewer observations. However, this multiplicative model is also restrictive because it does not take into account interactions that might occur between countries or actions.

In the context of ToTR, \citet{llosaandmaitra22} demonstrated that the multiplicative model of \citet{hoff15} is an instance of ToTR using an outer-product (OP) factorization model of $\tB$ . Instead of the OP model, here we assume that $\tB$ has the rank $R$ CP decomposition model of Equation \eqref{eq:cpform2}, which reduces the number of parameters from $M_1^2M_2^2M_3^2$ to $2R(M_1+M_2+M_3)$. Hence, we assume that $\tBnm = \sum_{r=1}^R \mVq{1}_{n_1,r} \mVq{2}_{n_2,r} \mVq{3}_{n_3,r} \mUp{1}_{m_1,r} \mUp{2}_{m_2,r} \mUp{3}_{m_3,r}$. The matrices $\mVq{p}$ and $\mUp{p}$ are of size $M_p \times R$ ($p=1,2,3$) and are not easily interpretable like in the previous models, but together the matrices describe the complex interactions that exist in the tensor $\tB$ though the combination of additive and multiplicative effects. The simple multiplicative effect model of Equation \eqref{eq:simplemult} is the special case where $R=1$, and the general case with rank $R$ assumes that $\tB$ is the addition of $R$ of these multiplicative terms. Hence, the CP model of $\tB$ allows us to study the relationships in the data as a whole with all their nested interactions and  allows us to  adjust the level of desired recovery by adjusting the rank $R$.

\paragraph{Integrating trend and long-term dependence}
The autoregressive model as described in Equation \eqref{eq:autoreg1} (where $\tXt = \tYtmo$) is not stationary. 
To see this, suppose first-order stationarity holds---i.e., $\tM \coloneqq \mbbE(\tYt) = \mbbE(\tYtmo)$. Then,
\begin{equation*}
\begin{aligned}
\tM &\coloneqq \mbbE(\tYt) 
\\  &= 
\mbbE(\mbbE(\tYt|\tYtmo))
\\  &= 
\mbbE(\langle\tYtmo|\tB\rangle)
=
\langle\tM|\tB\rangle,
\end{aligned}
\end{equation*}
which implies that either $\tM=0$ or $\tB$ is an identity operator. A similar calculation shows that if $\tB$ is an identity operator, then $\text{Var}(\tYtm) = \text{Var}(\tY_{(t-1)\boldsymbol{m}}) + \tM_{\boldsymbol{m}}$, which means that second-order stationarity (i.e., constant variance) doesn't hold unless $\tM =0$, and this is not possible for Poisson rates.
In the Gaussian case this is typically resolved by first removing the trend, so that indeed $\tM=0$. This is not possible under the Poisson-response assumptions, as it would lead to continuous responses. For these reasons, we will incorporate trend as part of the model by modifying the covariate $\tXt$.

An $F$-degree polynomial trend can be integrated in the PToTR autoregressive model of Equation~\eqref{eq:autoreg1} by defining $\tXt\in\mbbN^{(M_1+F+1)\times (M_2+F+1)\times (M_3+F+1)}$  as 
$$
\tXtn = 
\begin{cases}
  \tYtmon & \text{if } n_q\leq M_q \ \forall q\\
  t^{F} & \text{if } n_q=M_q+F+1 \ \forall q\\
  0& \text{otherwise}
\end{cases}.
$$
Hence $\tXt$ has \{$\tYtmo,t^0,t^1,\dots,t^F$\} as subtensors. Under this $\tXt$, it holds that  
$$
\langle\tXt|\tB\rangle=\tM_0t^0+\tM_1t^1+\dots +\tM_Ft^F+ \langle\tYtmo|\widetilde\tB\rangle,
$$
where all $\{\tM_0,\tM_1\dots,\tM_F,\widetilde\tB\}$ are subtensors of $\tB$. Similarly, other kinds of trend (such as seasonal trend) can be incorporated by modifying $\tXt$ as above.

Long-term dependence can also be incorporated as part of the covariate $\tXt$. For instance, we can include a general $F$-th order autoregressive model by considering $\tXt\in\mbbN^{M_1\times M_2\times M_3\times F}$, where 
    $\tX_{(t)\mi{m,f}} = \tY_{(t-f)\mi{m}}$ ($f=1,\dots,F$). Hence, $(\tY_{(t-1)},\dots,\tY_{(t-F)})$ are all subtensors of $\tXt$, and from Equation~\eqref{eq:autoreg1}
    $$
    \langle\tXt|\tB\rangle=\langle\tYtmo|\tB_1\rangle+\langle\tY_{(t-2)}|\tB_2\rangle+\dots+\langle\tY_{(t-F)}|\tB_F\rangle,
    $$
    where $\{\tB_1,\tB_2,\dots\tB_F\}$ are all subtensors of $\tB$. 
    
\begin{figure}[ht]
\centering
\includegraphics[width=\linewidth]{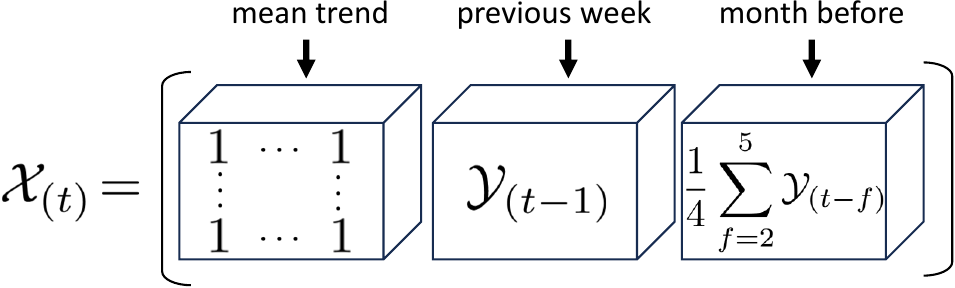}
\caption{4-way tensor covariates $\tXt$ used in the PToTR autoregressive model for the ICEWS experiments.}
\label{fig:longutidinalX}
\end{figure}
\subsubsection{Experimental results}
We compared PToTR, Gaussian ToTR~\citep{llosaandmaitra22}, and OP-based ToTR~\citep{hoff15} models involving the ICEWS database \citep{brien12}. 
We used the subset of the database described in \citep{hoff15}, which involves 25 countries and four quad classes as actions, leading to tensor responses $\tYt \in \mbbN^{25 \times 25 \times 4}$.  The dates selected comprise the 548 weeks between Thursday 01/01/2004 and Wednesday 07/02/2014. Hence, for $t=6, 7, \dots, T = 548$, our covariate tensor $\tXt \in \mbbR_{\geq 0}^{25 \times 25 \times 4 \times 3}$ comprises of the three tensors, each of dimension $25 \times 25 \times 4$, in a new fourth tensor dimension as illustrated in Figure \ref{fig:longutidinalX}. The three sub-tensors are $\bone_{25} \circ \bone_{25} \circ \bone_{4}$ (which integrates a mean trend), $\tYtmo$ (which integrates lag-1 dependence), and $\frac14\sum_{f=2}^5 \tY_{(t-f)}$ (which integrates longer-term dependence). Although we use the same data subset as in \citep{hoff15}, our preprocessing and modeling approach differs. Specifically, \citep{hoff15} applies a quantile–quantile transformation to align the data’s quantiles with those of the standard normal distribution. While this centers the data, the transformation is not invertible and thus discards information. In contrast, we work with the raw data and implicitly model the mean as part of the model.

Gaussian ToTR and OP-based ToTR were implemented using the \texttt{totr} R package of \citep{llosaandmaitra22} and PToTR was implemented using Algorithm \ref{alg:PToTR}. All models were initialized uniformly at random 100 times, with the chosen model having the largest resulting loglikelihood. 
We fitted the PToTR and Gaussian ToTR to the pairs of responses and covariates $(\tYt, \tXt)$ across different values of CP rank $R$ and display the resulting Bayesian Information Criterion (BIC) values \cite{schwarz78} in Figure \ref{fig:BICnRank}. For Gaussian ToTR, we tried many configurations of covariance matrices; however, none led to significantly different results from those presented in Figure \ref{fig:BICnRank}. Thus, we opted for the use of identity covariance matrices to ensure that PToTR and Gaussian ToTR result in the same number of parameters when the rank $R$ is the same. For comparison, we also fitted the OP-based ToTR of \citep{hoff15} to these data. Unlike the CP-based models, OP-based ToTR results in a different model for each mode permutation of $\tXt$. We fitted all 24 permutations and displayed the resulting BIC values as horizontal dashed lines in Figure \ref{fig:BICnRank}. 

\begin{figure}[ht]
\centering
\includegraphics[width=\linewidth]{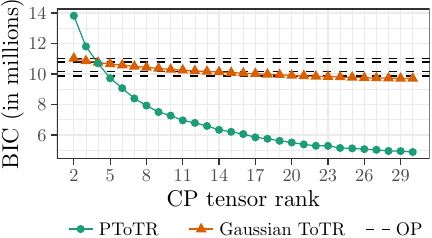}
\caption{BIC values from fitting PToTR, Gaussian ToTR, and OP-based ToTR models on the ICEWS database. PToTR outperforms its Gaussian counterpart for ranks greater than four, while Gaussian ToTR surpasses all OP-based ToTR for ranks exceeding 24. }
\label{fig:BICnRank}
\end{figure}

PToTR demonstrates superior performance compared to its Gaussian counterpart for $R > 4$ based on BIC values, indicating a better fit. The tensor $\tB \in \mbbRp^{25 \times 25 \times 4 \times 3 \times 25 \times 25 \times 4}$ contains over 18 million parameters without any low-rank constraints while each additional CP rank contributes only 111 additional parameters to the model.
This allows for the selection of a large rank $R$ while still achieving significant parameter reduction; for instance, a rank of $R = 30$ results in a 99.98\% reduction in parameters. In contrast, the number of parameters in the OP-based ToTR varies only with different mode permutations of the tensor $\tXt$, ranging from 297 to 1266. Notably, the best BIC values for the OP-based ToTR models corresponded to those with the highest number of parameters, suggesting that the OP-based ToTR may be too parsimonious to accurately represent the complex interactions contained in the tensor $\tB$. In summary, PToTR emerged as the best model, effectively modeling the random components through the Poisson distribution, and the complex relational interactions via the CP tensor model.

\subsection{Positron emission tomography imaging}\label{sec:pet}

Section~\ref{sec:motiv_pet} introduced the problem of  positron emission tomography (PET) image reconstruction and described the ML-EM method leveraging Poisson regression to solve this problem for 2-D data. In this section, we demonstrate that the PET reconstruction problem can be modeled using PToTR and show how it could be used in higher-order image reconstruction. We illustrate these extensions by simulating 4-D PET data from real imaging data and fit ML-EM and PToTR models to this data for comparison.

In PET scans, subjects are injected with a radiotracer containing a positron-emitting radionuclide whose beta decay produces a positron. This positron travels a few millimeters in tissue before encountering a nearby electron and annihilating, yielding two gamma photons that travel in nearly opposite directions. The PET scanner captures these gamma photons simultaneously within a brief time frame, allowing it to pinpoint the location of annihilation events in the body. This photon detection is recorded in a sinogram, which encodes the count of annihilations that occurred along photon paths at different angles and radial distances.
On the left display of Figure \ref{fig:phantom} we show the commonly-used Shepp-Logan phantom \citep{sheppandlogan74} as if it were under a PET scanner, with stars indicating the location of  three annihilations, and lines indicating the directions of the emitted photons. The right panel of Figure \ref{fig:phantom} indicates the corresponding position of these measurements within the observed sinogram. 

\begin{figure}[ht]
\centering
\includegraphics[width=\linewidth]{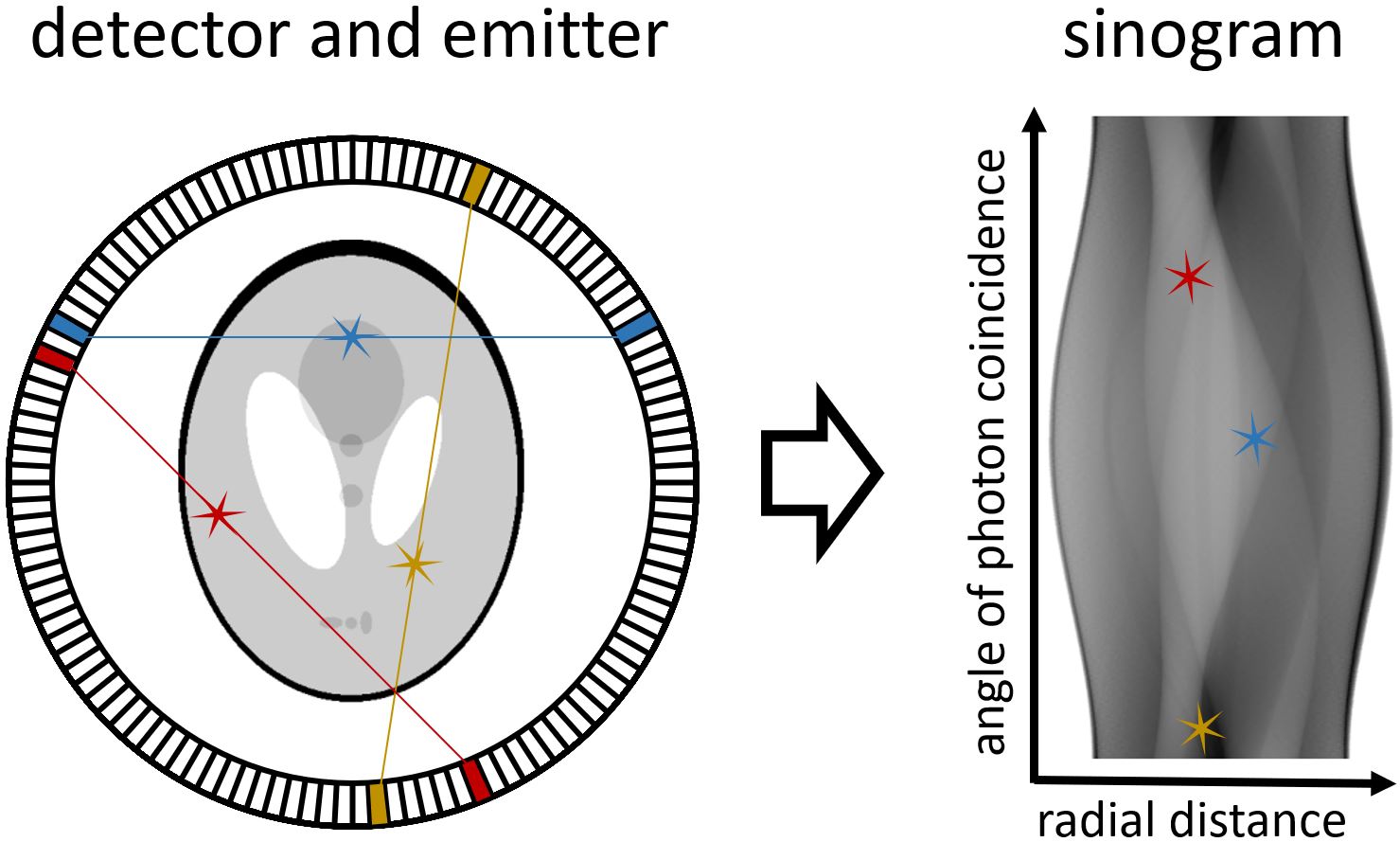}
\caption{The Shepp-Logan phantom under a PET scanner, with three annihilation events (colored stars) depicted on the left as they are being scanned and their corresponding locations in the resulting sinogram output depicted on the right.}
\label{fig:phantom}
\end{figure}

The goal of PET reconstruction is to estimate an image $\mat{B}$ (e.g., the Shepp-Logan phantom) under the scanner from the data captured in the observed sinogram $\mat{Y}$. 
A model for 2-D PET image reconstruction \citep{sheppandvardi82} can be written as
\begin{equation}\label{eq:PET1}
 \mat{Y} \sim \Poi( \mathcal{R}(\mat{B})) \; ,
\end{equation}
where $\mathcal{R}(\mat{B})$ denotes the discrete Radon transform of $\mat{B}$ and is of the same dimensions as $\mat{Y}$.  For example,  the phantom image $\mat{B}$ on the left of  Figure \ref{fig:phantom} has dimensions $(512,512)$ and the sinograms $\mat{Y}$ and $\mathcal{R}(\mat{B})$ have dimensions $(512,2048)$.

Next we introduce our PToTR-based PET image reconstruction method, which alleviates the ill-conditioned nature of multi-dimensional PET reconstruction by regularizing the image to have a low-rank CP tensor decomposition.
 
\subsubsection{Methodology}

Because the discrete Radon transform $\mathcal{R}(\mat{B})$ is a linear operator \citep{beylkin87}, we 
may write each entry as
\begin{equation}\label{eq:DRT}
\mathcal{R}(\mat{B})_{i_1,i_2} = \langle \mat{R}_{(i_1,i_2)},\mat{B}\rangle,    
\end{equation}
where $\mat{R}_{(i_1,i_2)}\in\mbbR^{N_1\times N_2}$ is a matrix of the same dimensions as $\mat{B}$. 
Above, we use the subscripts $(i_1,i_2)$ because these will correspond to PToTR observations. Based on an implementation of the discrete Radon transform, we can obtain the $\mi{n} = (n_1,n_2)$ entry of the matrix $\mat{R}_{(i_1,i_2)}$ using the identity
$$
\mat{R}_{(i_1,i_2)\mi{n}} = \mathcal{R}(\mat{E}_\mi{n})_{i_1,i_2},
$$
where $\mat{E}_\mi{n}\in\{0,1\}^{N_1\times N_2}$ is one at position $\mi{n} = (n_1,n_2)$ and zero elsewhere.  When the Radon transform is applied to $\mat{E}_\mi{n}$, we obtain the projection of that single point along various angles.
This will yield the sinogram $\mathcal{R}(\mat{E}_\mi{n})$, which represents how that point contributes to the overall image at different angles.  In practice, the matrices $\mat{R}_{(i_1,i_2)}$ also encode various machine-specific details, such as detector geometry, radiation source, and sampling resolution. 

Using the linearity of the discrete Radon transform in Equation \eqref{eq:DRT}, we can write the PET model of Equation \eqref{eq:PET1} as a scalar-response, matrix-covariate regression model 
\begin{equation}\label{eq:petmodel_vec}
    y_{(i_1,i_2)}\overset{\text{indep.}}{\sim} \Poi( \langle  \mat{R}_{(i_1,i_2)},\mat{B} \rangle),
\end{equation}
where the sample size is $I_1I_2$ ($i_1=1,\dots,I_1$ and $i_2=1,\dots,I_2$).
Equation \eqref{eq:petmodel_vec} is formulated for 2-D image reconstruction and can be extended to 3-D PET by stacking multiple 2-D images \citep{bendriemandtownsend13}.
 Similarly, stacking 3-D measurements across time form a 4-D PET scan, which has been suggested as a way to alleviate motion-related measurement degradation \citep{nehmehetal04,lietal06}. Our PToTR model allows us to extend the 2-D PET image reconstruction of Equation \eqref{eq:petmodel_vec} to $P$-D PET image reconstruction by stacking the scalar responses into a tensor:
\begin{equation}\label{eq:petmodel_tens}
    \tY_{(i_1,i_2)}\overset{\text{indep.}}{\sim} \Poi(  \langle \mat{R}_{(i_1,i_2)} | \tB\rangle),
\end{equation}
where $\tB\in \mbbRp^{N_1\times N_2\times M_1\times \dots\times M_{P-2}}$ is the $P$-dimensional image to be reconstructed and each $\tY_{(i_1,i_2)}\in \mbbN^{M_1\times \dots\times M_{P-2}}$ is a $(P-2)$-dimensional tensor response (e.g., a scalar in the 2-D PET model and a vector in the 3-D PET model). Equation \eqref{eq:petmodel_tens} is a Poisson-response tensor-on-matrix regression model that is a special case of PToTR in Equation \eqref{eq:ptotr}. 

While our $P$-D PET model is based on stacking multiple 2-D PET models, alternative definitions may employ different mathematical frameworks or data acquisition techniques. 
However, as long as the corresponding discrete Radon transform is a linear operator--i.e., has a form similar to Equation \eqref{eq:DRT}-- we can formulate the reconstruction problem as a specific instance of PToTR.
As a proof of concept, we next demonstrate our methodology on a synthetic 4-D PET experiment.

\subsubsection{Experimental results}

To illustrate the performance of PToTR for PET image reconstruction, we simulate data from the Poisson-response matrix-on-matrix regression model that is an instance of Equation \eqref{eq:petmodel_tens} when $P=4$. Here the tensor $\tB\in\mbbRp^{256\times 256\times 240\times 4}$ corresponds to four image measurements of a subject's brain of size $256\times 256\times 240$. 
 The real data in $\tB$ that we used is provided by \citet{hawcoetal22} and available for download from \url{https://openneuro.org} (accession number ds003011, version 1.2.3).
 For our simulation we chose measurements taken from the same subject (subject three), same location and scanner (Maryland Psychiatric Research Center using Siemens Tim Trio 3T), and during the second year of study.  
The matrix responses $\mat{Y}_\mi{(i_1,i_2)}\in \mbbN^{240\times 4}$ corresponds to the stacking of 4 longitudinal measurements across a depth dimension of size 240.  We use the Radon transform implementation of \citet{otnessandrim23}, transforming the $256\times 256$ ($N_1\times N_2$) axial images to $256\times 1024$  $(I_1\times I_2)$ sinograms.

We fitted the 4-D PET model of Equation \eqref{eq:petmodel_tens} using Algorithm \ref{alg:PToTR} with CP ranks $R=2,5,21,84,336$. While our regression model of Equation \eqref{eq:petmodel_tens} has a sample size of $I_1I_2=262,\!144$, in real scenarios only a subset of these would be observed; thus we performed estimation using randomly selected subsets containing 2\%, 4\%, 8\% and 16\% of the total data. 
For comparison, we also fitted the ML-EM algorithm that estimates $\tB$ without any restriction. For this estimation, one can iteratively use Theorem \ref{thm:nnstep} with
$\mat{Y} = [\vecc(\mat{Y}_{(1,1)})\dots \vecc(\mat{Y}_{(I_1,I_2)})]$ and  $\mat{D} = [\vecc(\mat{R}_{(1,1)})\dots\vecc(\mat{R}_{(I_1,I_2)})]\transpose$ until convergence. In this case, the resulting estimated matrix $\widehat{\mat{B}}$ is a matricization of the desired estimated tensor $\thB$, but without any rank constraint.

 In Figure \ref{fig:brainnorms}, we present the root mean square error (RMSE), defined as $\sqrt{||\thB - \tB||^2/d}$, where $d = 256 \times 256 \times 240 \times 4 = 62,914,560$ is the product of the dimension sizes $\tB$, across various reconstruction methods, sample sizes and number of iterations. The figure shows that for 2\% of the data, the RMSE increases with the number of iterations, indicating that the maximum likelihood estimate (MLE) is not an accurate reconstruction; as the algorithm approaches the MLE, the reconstruction becomes noisier in terms of RMSE. This trend persists for ML-EM as we increase the data percentage from 2\% to 16\%. In contrast, for PToTR($R$), we observe that increasing the rank $R$ or data percentage results in lower RMSE as the number of iterations increases; specifically, for 16\% of the data, PToTR fits using all ranks exhibit monotonically decreasing RMSE. This suggests that, unlike ML-EM, the closer the estimate gets to the MLE, the better the reconstruction becomes in terms of RMSE. For the 4\%, 8\%, and 16\% cases, ML-EM shows a sharp decrease in RMSE during the initial iterations, followed by an increase as iterations continue. While this initial low RMSE is comparable to that of PToTR, it is important to note that one typically cannot determine when this optimal RMSE will occur without knowledge of the ground truth. In contrast, with an appropriate sample size and/or rank, our PToTR reconstructions improve with more iterations, regardless of whether $\tB$ is known. Furthermore, unlike ML-EM, which treats each element in $\tB$ as a parameter to be estimated (resulting in $d$ parameters), our PToTR approach is significantly more parsimonious. For example, PToTR with rank $R = 84$ has only 63,168 parameters, making it nearly three orders of magnitude more parsimonious than ML-EM.

\begin{figure}[t!]
\centering
\includegraphics[width=\linewidth]{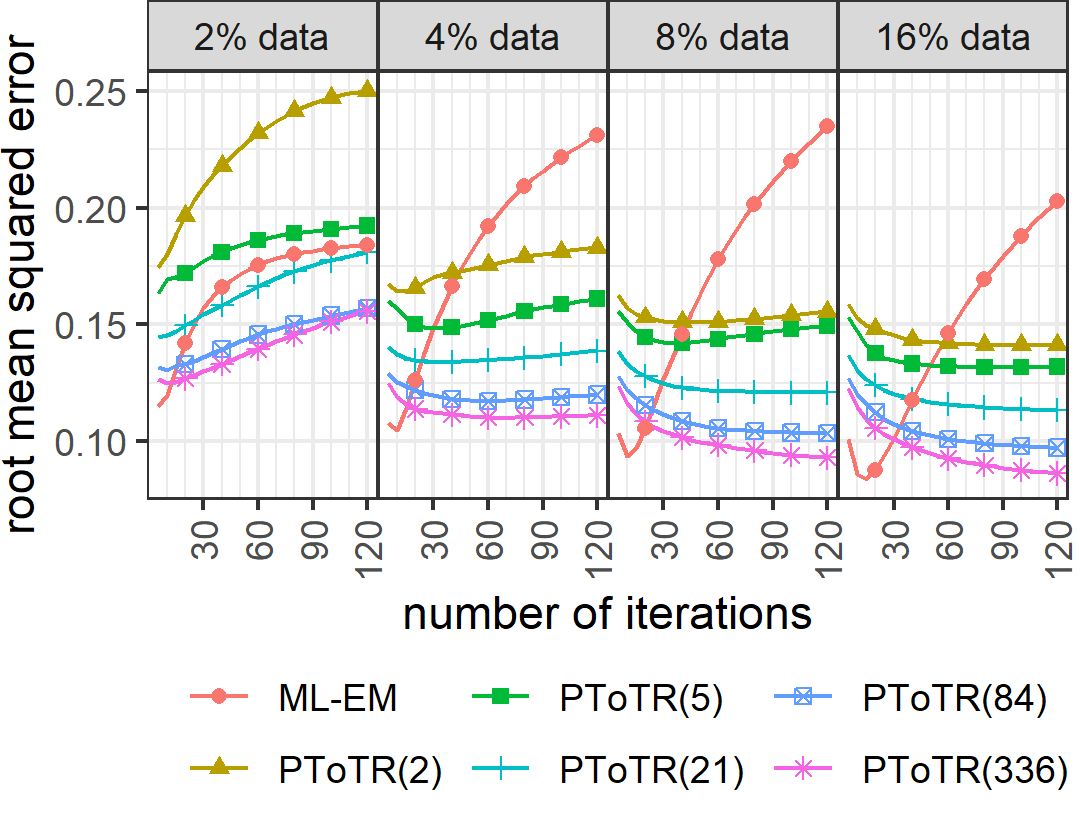}
\caption{RMSE ($\propto\sqrt{||\thB - \tB||^2}$) across different reconstruction methods, sample sizes, and iterations in the estimation. Results show that for increased data and parameters, PToTR($R$), improves with increased iterations, while ML-EM exhibits an initial decrease in RMSE followed by a substantial increase.
}
\label{fig:brainnorms}
\end{figure}

Figure \ref{fig:figbrains} shows example ground truth images from $\tB$ across three different planes: an axial image from the first frame ($\tB_{:,:,120,1}$), a coronal image from the second frame ($\tB_{:,128,:,2}$), and a saggital image from the third frame ($\tB_{128,:,:,3}$). The figure also shows image reconstructions using ML-EM and PToTR with ranks 84 and 336, modeled using 4\% and 16\% of the data, after 10 and 120 iterations. Consistent with previous results, we observe that a larger number of iterations improves the reconstruction quality for PToTR, while resulting in noisier reconstructions for ML-EM. Visually, it appears that PToTR with rank $R=336$ and for 120 iterations, retains most of the gray matter details across all planes and percentages of data.
 
 \begin{figure*}[ht]
\centering
\includegraphics[width=\linewidth]{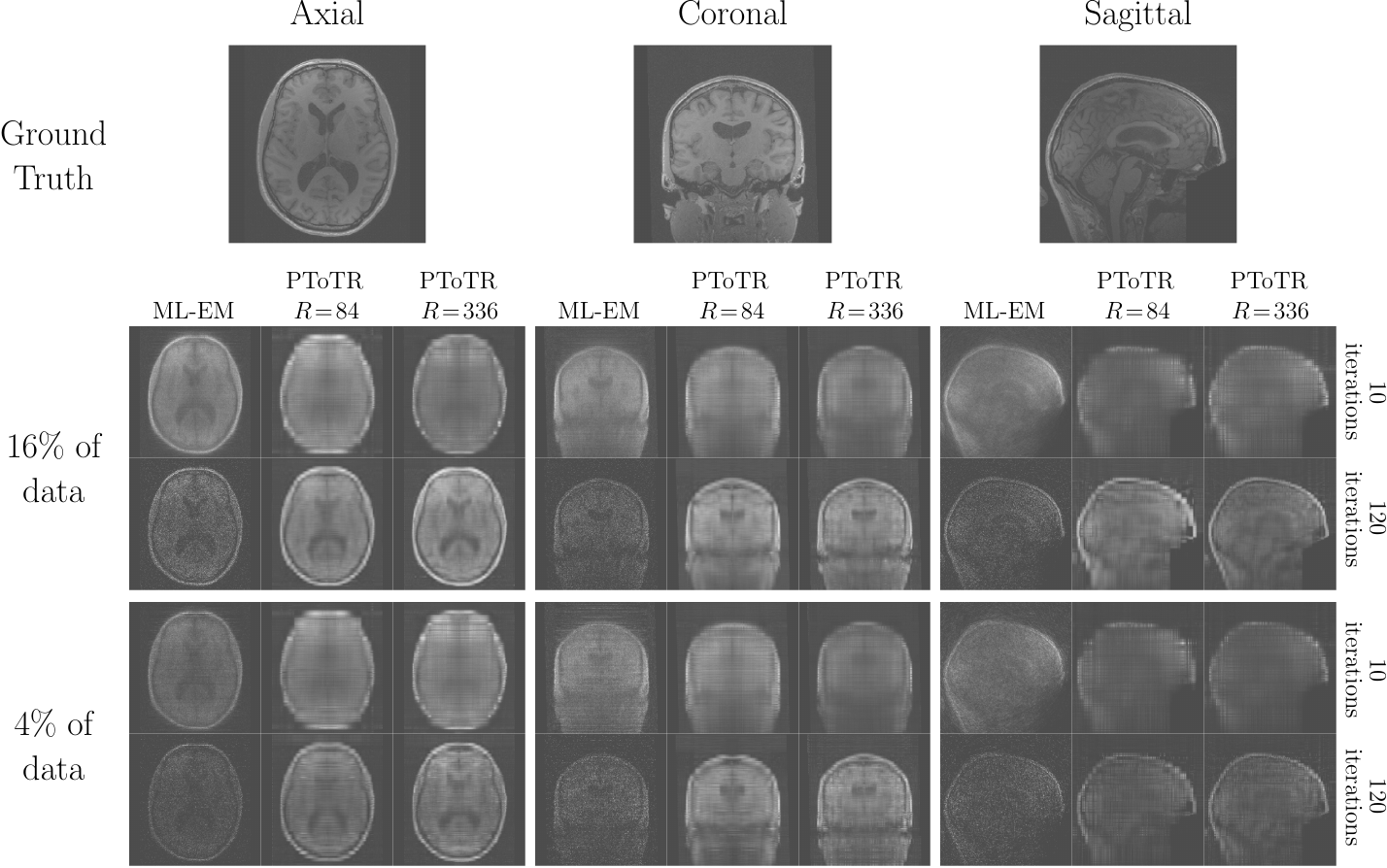}
\caption{PET reconstructions for different amounts of data (4\%, 16\%), different number of iterations (10 and 100), and multiple reconstruction methods (ML-EM, and PToTR for ranks 84 and 336). Unlike ML-EM, the recovery of our PToTR method is increased for larger numbers of iterations.}
\label{fig:figbrains}
\end{figure*}

\subsection{Change-point detection in dyadic data}\label{sec:ptanova}
Section \ref{sec:motiv_changepoint} introduced the  problem of change-point detection for tensor responses containing count data. One-way analysis of variance (ANOVA) can be used for change-point detection by distinguishing means before and after some point in a sequence of data~\citep{AlAl14}. Tensor-variate ANOVA (TANOVA) extends ANOVA to the general case of tensor responses and was introduces as a special case of ToTR using indicator covariates~\cite{llosaandmaitra22}. In the Gaussian case, TANOVA has been used to detect brain regions with significant interaction between death-related stimuli and suicide attempt status, and also to distinguish facial characteristics across different factors \citep{llosaandmaitra22}. In this section, we introduce Poisson-response TANOVA (PTANOVA) as a special case of PToTR and demonstrate its use in change-point detection on simulated communication data over time, where responses are tensors of counts.

\subsubsection{Methodology}
Consider communications between a group of $M_2$ senders and $M_3$ (possibly same) receivers interacting about $M_1$ different topics over $T$ discrete times. Similar to Section \ref{sec:autoregressive} we can encode the communication that occur at a particular time $t$ as a tensor $\tYt\in\mbbN^{M_1\times M_2\times M_3}$, with $\tYtm$ entry corresponding to the number of times a communication involving topic $m_1$ was sent from $m_2$ to $m_3$ during time $t$. Suppose an event occurs at time $\tau\in\{1,2,\dots,T-1\}$ that causes the communication patterns to change going forward in time. If we assume that the communications vary according to a Poisson distribution, this change in topic can be modeled as
\begin{equation}\label{eq:changepoint}
\tYt \sim 
\Bigg\{
    \begin{array}{lr}
        \Poi(\tB^{(1)})  &t =1,2,\dots,\tau\hfill\\
        \Poi(\tB^{(2)})  & t=\tau+1,\tau+2,\dots,T
    \end{array}.
\end{equation}
In Equation \eqref{eq:changepoint} we assume that the communications follow a Poisson distribution with mean $\tB^{(1)}$ before the event and a different mean $\tB^{(2)}$ after the event. The goal is to estimate $\tB^{(1)}$, $\tB^{(2)}$, and $\tau$. 
 Equation \eqref{eq:changepoint} can be equivalently written in PTANOVA form as
\begin{equation}\label{eq:tanovachangepoint}
    \tYt \sim \Poi(\langle \bx_t | \tB\rangle),
\end{equation}
where the covariates are $\bx_t = [1,0]\transpose$ for $t=1,2,\dots,\tau$, $\bx_t = [0, 1]\transpose$ for $t=\tau+1,\tau+2\dots,T$, and the regression coefficient $\tB \in \mbbRp^{2\times M_1\times M_2\times M_3}$ contains the two Poisson coefficients $\tB^{(1)} =\langle [1, 0]\transpose | \tB\rangle$ and $\tB^{(2)} =\langle [0, 1]\transpose | \tB\rangle$ as subtensors.
While equations \eqref{eq:changepoint} and \eqref{eq:tanovachangepoint} are equivalent, the PTANOVA formulation in Equation \eqref{eq:tanovachangepoint} is framed in terms of the  single regression coefficient tensor $\tB$, assumed to have CP structure
\begin{equation}
    \tB = [\![ \blambda; \mVq{1},\mUp{1},\dots,\mUp{P}]\!].
\end{equation}

Our formulation of change-point detection in Equations \eqref{eq:changepoint} and \eqref{eq:tanovachangepoint} depends on a known value of change-point $\tau$. We estimate $\tau$ through maximum likelihood estimation. First, denote the loglikelihood that results from fitting the PToTR of Equation \eqref{eq:tanovachangepoint} for a fixed value of $\tau$ as $\ell_{\tau}(\thB_{\tau})$. Then
\begin{equation}\label{eq:ptotr_esttau}
\widehat{\tau} = \argmax_{\tau}\ell_{\tau}(\thB_{\tau})
\end{equation}
corresponds to the maximum likelihood estimate for $\tau$. This estimate has a few other interpretations.  Note that since all  $\ell_{\tau}(\thB_{\tau})$ have the same number of parameters, choosing $\widehat\tau$ as in Equation \eqref{eq:ptotr_esttau} is equivalent to choosing the model with the smallest BIC or AIC. Furthermore, consider the hypotheses 
$$
H_0: \tB^{(1)} = \tB^{(2)} 
,\quad 
H_A: \tB^{(1)} \neq \tB^{(2)}.
$$
A likelihood ratio test statistic for these hypothesis is 
$$
\Lambda_{\tau} = 2(\ell_{\tau}(\thB_{\tau}) - \ell_{0}(\thB_{0})),
$$
where $\ell_{0}(\thB_{0})$ is the loglikelihood that corresponds to the case where $\tB^{(1)} = \tB^{(2)}$, and it can be obtained from fitting a PToTR with responses $\tYt$ and covariates $1$ for all $t=1,2,\dots,T$. The test statistic $\Lambda_{\tau}$ is largest when $\ell_{\tau}(\thB_{\tau})$ is largest. Hence, our estimated $\widehat\tau$ from Equation \eqref{eq:ptotr_esttau} corresponds to the model with the largest evidence against the null hypothesis of no change-point  \citep{killickandeckley14}, \citep{granjon97}.

The matrix factors can be estimated according to Algorithm \ref{alg:PToTR}, which is greatly simplified after considering the vector-variate and binary nature of the covariates.  For these simplifications first let
$
\tY_{1\tau} = \sum_{t=1}^\tau \tYt, 
\tY_{2\tau} = \sum_{t=\tau+1}^{T} \tYt,
\mVq{1}  = [\bv^{(1)}_1 \ \bv^{(1)}_2]\transpose,
\mat{U}=\odot_{p} \mUp{p}$, and 
$\mat{U}_{-p}=\odot_{k\neq p} \mUp{k}$.
Then Equation \eqref{eq:MLestV} is simplified for $(\tau_1,\tau_2) = (\tau,T-\tau)$ as
\begin{equation*}\label{eq:simpMLestV2}
\mtVq{1}_{\{k+1\}}  \leftarrow \mtVq{1}_{\{k\}}   * 
\begin{bmatrix} 
\left[ \vecc\left(\tY_{1\tau}\right)\oslash\left(\mat{U} \widetilde\bv_{1\{k\}}^{(1)}  \right)  \right]\transpose\mat{U}/\tau_1
\\
\left[ \vecc\left(\tY_{2\tau}\right)\oslash\left(\mat{U} \widetilde\bv_{2\{k\}}^{(1)} \right)  \right]\transpose\mat{U}/\tau_2
\end{bmatrix}.
\end{equation*}
Similarly, Equation \eqref{eq:MLestU} is simplified as
\begin{equation*}\label{eq:simpMLestB2}
\begin{aligned}
\mtUp{p}_{\{k+1\}} \leftarrow
\mtUp{p}_{\{k\}}  &* 
\Bigg\{
\sum_{j=1}^2
\left[\left(
[\tY_{j\tau}]_{(p)}\oslash(\mtUp{p}_{\{k\}}\mat{G}_{jp})
\right)\mat{G}_{jp}\transpose\right]
\Bigg\}\\
&\oslash
\left\{
\bone\bw\transpose
\right\},
\end{aligned}
\end{equation*}
where $\bw = \tau_1 \bv^{(1)}_1   + \tau_2 \bv^{(1)}_2  $, $\mat{G}_{jp} = \mat{U}_{-p}\diag(\bv^{(1)}_j  )$.

\subsubsection{Experimental results}

In this experiment we simulate a setting where 10 subjects communicate with each other across 15 topics and 14 time-steps, so each $\tYt\in \mbbN^{10\times 10\times 15}$ and $t=1,2,\dots,14$.  Each $\tYt$ is generated element-wise from a Poisson distribution with rate $\omega=1$ except for one of the 15 topics after time $\tau$, which uses rate $\omega=a$.  This means that the Poisson rate for one of the $10\times 10$ matrix slices of $\tYt$ is $a$ whenever $t>\tau$ and is $1$ when $t\leq \tau$. For this experiment we chose $a\in\{2,4,8\}$, and the true value of $\tau=\in\{0,1,4,6\}$. Note that $\tau = 0$ means that there is no change-point and $\tau = 1$ means that there was only one time instance before the change-point. 
We applied the PTANOVA model described in Equation \eqref{eq:tanovachangepoint} for each value of $a$, considering all candidate change-points $\tau = 1, 2, \ldots, 13$, and for CP ranks of $R = 2, 4, 6, 8$. The resulting loglikelihoods from each model fit are presented in Figure \ref{fig:llks}, where vertical dashed lines represent the change-point values in each plot.
\begin{figure}
\includegraphics[width=\linewidth]{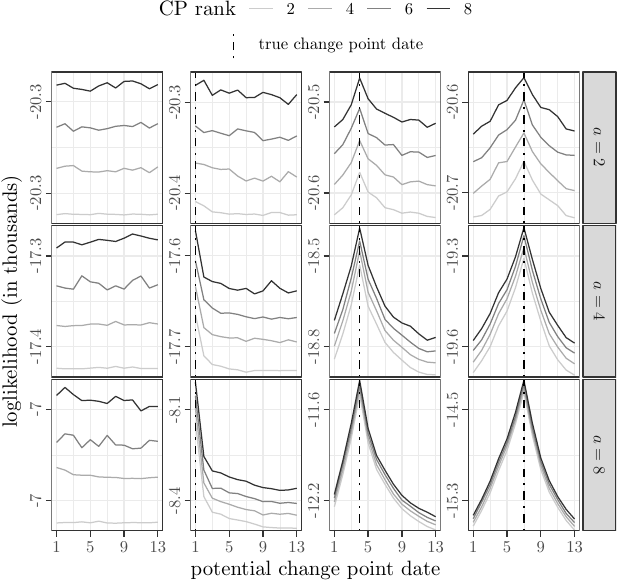}
\caption{Loglikelihoods resulting from fitting the PTANOVA model of Equation \eqref{eq:tanovachangepoint} for different values of $a$, $\tau$, CP rank. There is a clear peak at the true location of change-point for all the cases that have a change-point (right three columns). There is no clear peak for the case with no change-point (left column).
}\label{fig:llks}
\end{figure}

In Figure \ref{fig:llks}, we observe distinct peaks in loglikelihood at most of the true change-point locations, indicating successful identification of the change-point in almost all applicable cases. The only exception occurs at $a = 2$ and $\tau = 1$, which is anticipated since there is only one time point preceding the change-point, and this change-point reflects only doubling in communication volume. The loglikelihoods all peak at the true change-point value for $\tau = 1$ when $a = 4$ and $a = 8$, as well as for $a = 2$ when $\tau = 4$ and $\tau = 6$. This pattern is consistent throughout the figure: clearer peaks in loglikelihood are associated with larger values of $a$ and change-points $\tau$ that are closer to the center of the time period of communications. Conversely, the left column, which represents scenarios without a change-point, shows no distinct peaks in loglikelihood. Furthermore, larger ranks $R$ yield higher loglikelihoods, as expected. Importantly, we successfully detected the change-point across all values of $R$ except when $a = 2$ and $\tau = 1$. In conclusion, our analysis demonstrates that the PTANOVA model effectively identifies change-points, with improved detection for more drastic changes and a greater number of observed groups before and after the change-point. This suggests the model is robust in various settings, reinforcing its utility in analyzing communication patterns over time.

\section{Conclusions and future work}\label{sec:conclusion}
Poisson-response tensor-on-tensor regression (PToTR) represents a significant advancement in the field of multi-dimensional data analysis, offering a principled and versatile framework that integrates the strengths of Poisson tensor decompositions and tensor-on-tensor regression (ToTR). By leveraging the statistical properties of Poisson-distributed data and the relational modeling capabilities of ToTR, PToTR addresses the unique challenges posed by complex, structured count data. This innovative approach is both theoretically sound and practically effective as demonstrated through its diverse applications in predicting political events, reconstructing PET images, and detecting change points in dyadic data.

There are several directions for future work. One potential avenue is the incorporation of a log link function in the Poisson regression model. The log link function ensures that the expected counts are always positive and can model multiplicative relationships between responses and covariates (instead of additive relationships as we have done here), providing a different framework for certain types of count data. Another promising direction is the extension of PToTR to a generalized ToTR (GToTR) model that could allow for general response distributions (e.g., binomial or negative binomial) and link functions (e.g., log, logit, probit, etc.), thereby broadening the applicability of the method to a wider range of data types and data analysis problems. Additionally, exploring other low-rank tensor models for the regression coefficient tensor, such as the Tucker and tensor train (TT) decompositions, could offer further benefits. While these are promising avenues for future work, it is important to note that the effectiveness of each approach, including our current PToTR, will depend on the characteristics and requirements of the data in each case.

\section*{Acknowledgments}

We thank Carolyn D. Mayer, J. Derek Tucker, and Jonathan Berry of Sandia National Laboratories for several helpful discussions during the writing of this manuscript. We also thank Ranjan Maitra of The Department of Statistics at Iowa State University for his useful insights on the PET model.

This paper describes objective technical results and analysis. Any subjective views or opinions that might be expressed in the paper do not necessarily represent the views of the U.S. Department of Energy or the United States Government.

\bibliographystyle{IEEEtran} 
\bibliography{references} 

\end{document}